\newtheorem{proposition}{Proposition}
\newtheorem{definition}{Definition}
\newtheorem{theorem}{Theorem}
\newtheorem{observation}{Observation}
\begin{document}

\title{{Motivating Smartphone Collaboration in Data Acquisition and Distributed Computing}}

\author{Lingjie~Duan,~\IEEEmembership{Member,~IEEE,} Takeshi~Kubo, Kohei~Sugiyama,~\IEEEmembership{Member,~IEEE}, Jianwei~Huang,~\IEEEmembership{Senior~Member,~IEEE,} Teruyuki Hasegawa, and Jean Walrand,~\IEEEmembership{Fellow,~IEEE}
\IEEEcompsocitemizethanks{\IEEEcompsocthanksitem Lingjie Duan$^1$ is with the Engineering Systems and Design Pillar, Singapore University of Technology and Design, Singapore-138682. Takeshi Kubo$^2$, Kohei Sugiyama$^3$ and Teruyuki Hasegawa$^4$ are with KDDI R\&D Laboratories Inc., Japan. Jianwei Huang$^5$ is with Department of Information Engineering, The Chinese University of Hong Kong. Jean Walrand is with Department of Electrical Engineering and Computer Sciences, University of California at Berkeley, California-94720. Email: $^1$lingjie\_duan@sutd.edu.sg, $^{2,3,5}$\{t-kubo,ko-sugiyama,teru\}@kddilabs.jp, $^4$jwhuang@ie.cuhk.edu.hk, $^5$wlr@eecs.berkeley.edu
}
\thanks{This work was supported by NICT of Japan, the General Research Funds (Project Number
412710) established under the University Grant Committee of Hong Kong,
and the CUHK Global Scholarship Programme for Research Excellence for
both junior faculty and PhD candidate. Also, the research of Jean Walrand was supported in part by NSF-NetSE grants 1024318 and 0910702, and the research of Lingjie Duan was supported by SUTD-MIT International Design Center (IDC) Grant (Project Number IDSF1200106OH).}
\thanks{Part of these
results were presented at IEEE INFOCOM 2012 \cite{INFOCOM2012}.}
}


\IEEEcompsoctitleabstractindextext{%
\begin{abstract}
This paper analyzes and compares different incentive
mechanisms {for a master to motivate} the collaboration of
smartphone users on both \emph{data acquisition} and
\emph{distributed computing} applications. To collect massive sensitive data from users, we propose a {reward}-based
collaboration mechanism, where the master announces a total reward
to be shared among collaborators, and the collaboration is
successful if
there are enough users wanting to collaborate.
We show that if the master knows the users' collaboration costs,
{then he can choose to involve only users with the lowest costs.} However, without knowing users' private information,
then he needs to offer a larger total reward to attract enough
collaborators. Users will benefit from knowing their costs before
the data acquisition. Perhaps surprisingly, the master may benefit as the variance of users' cost distribution increases.

{\quad \ To utilize smartphones' computation resources to solve complex computing problems,} we
{study how the master can design an optimal contract by
specifying} different task-reward combinations for different user
types. Under complete information, we show that the master involves
a user type as long as the master's preference characteristic
outweighs that {type}'s unit cost. {All
collaborators achieve a zero payoff in this case.} {If the
master does not know users' private cost information,} however,
{he} will conservatively target at {a
smaller group of users with} small costs, and has to
give most benefits to the collaborators.
\end{abstract}

\begin{keywords}
Smartphone application, data acquisition, distributed computing, game theory, contract theory
\end{keywords}}

\maketitle

\IEEEdisplaynotcompsoctitleabstractindextext

%
\IEEEpeerreviewmaketitle

\linespread{0.95}

\section{Introduction}
{Smartphones} are becoming the mainstream in
mobile phones. According to a survey by ComScore in 2010, over 45.5
million people owned smartphones out of 234 million total
{mobile phone} subscribers in the United States
\cite{gonsalves2010android}. In 2012, the global smartphone shipments grew 43\% annually by reaching a record 700 million units \cite{Epstein}.

Given millions of smartphones sold annually, {recent phone applications start to utilize the power of} smartphone users' collaborations
\cite{chen2000survey,Julia}. {In such an application, there is  a \emph{master} (e.g., Apple or Google in the following examples) who wants to implement some application or service based on user collaborations.}
%
We can categorize these applications in two types as follows.

In the first type of {\emph{data acquisition}} application, a master wants to acquire enough
data from smartphone users to build up a database. According to
\cite{Julia}, Apple's iPhone and Google's Android smartphones
regularly transmit their {owners}' location data (including GPS
coordinates) back to Apple and Google without users' agreements, respectively. For example, an
Android phone collects its location data every few seconds and
transmits the data to Google at least several times an hour. The
phone also transmits back the name, location, and signal strength of any
nearby Wi-Fi networks. After collecting enough location data from
users, Google can successfully build a massive database capable of
providing location-based services.
One service can be live map of {auto} traffics, where {the dynamics of} users'
location data on a highway indicate whether there is a
traffic jam. {Another service can be constructing a large-scale public
Wi-Fi map.}
%
According to \cite{2015}, the global location-based service market
is growing strongly, and its revenue is expected to increase from
US\$2.8 billions  to US\$10.3 billions between 2010 and 2015.
%
%
%
{In order to perform the above data acquisition, a lot of
efforts need to be spent to get users' consent and protect users'
privacy (e.g.,
\cite{Investigation,Senators,chow2006peer,gruteser2003anonymous}).
When a user collaborates {in this kind of applications}, he will incur a cost such as
loss of privacy. }

In the second type of {\emph{distributed computing}}
application, a master wants to solve complex engineering or
commercial problems inexpensively {using distributed computation
power}. Smartphones now have powerful {and power-efficient processors (e.g., Dual-core A6 chip of Apple iPhone 5 which is comparable to many laptops' CPUs several years ago)}, {outstanding battery life,} abundant memory, and open operating
systems (e.g., Google Android) \cite{TMT} that make them suitable
for complex processing tasks.
Since millions of smartphones remain unused most of the time, a
master might want to solicit smartphone collaborations in
distributed computing (e.g.,
\cite{rodriguez2011introducing,kelenyi2008energy,palmer2009ibis}). In this case, a
user's collaboration cost may be due to {loss of energy
and {reduction of} {physical} storage}.

{In this paper, we will design incentive mechanism for smartphone collaborations}
in data
acquisition and distributed computing applications, {both of which aim to incentivize users to participate the collaboration through proper rewards}. {Then we can compare the similarity and difference in mechanism design for both applications.}
{For each type of applications, we will similarly consider various information {scenarios},  depending on what the master and users know.}
{In particular}, the master may or may not know each {smartphone} user's characteristics
such as collaboration costs and collaboration efficiencies.

{The two types of applications have different requirements and
lead to different models.} {Collaborators in data acquisition usually take similar tasks and hence  should be rewarded similarly, whereas collaborators in distributed computing will undertake different amounts of work according to their different computation capabilities.} More specifically, in data acquisition applications, {we
consider a threshold-based revenue model, where a master can earn a fixed
positive revenue only if he can involve enough (larger than a
threshold) smartphone users as collaborators, such that he can
build a large enough database to support the application like the live map of auto traffics.} Since
data acquisition only requires {simple} periodic data reporting,
{we can assume that users} are homogeneous in contribution and
efficiency.\footnote{{For example, a huge number users take the same simple task by periodically reporting their GPS location data, and it is reasonable and fair for the master to reward them equally (as in Amazon Mechanical Turk). It is actually difficult for the master to differentiate contributions and rewards differentially to a huge number of users, and monitoring and updating his beliefs of users' private information is often impractical.}} In distributed computing applications, however, we consider
{a model where} the master's revenue increases in {users'
efforts}. Also, users {are} {heterogeneous {in computing efficiencies} and}
{should be treated differently. For example,  the most efficient users should be highly rewarded to encourage them to undertake large tasks.}

Our key results and contributions are as follows:
\begin{itemize}
\item \emph{New {reward}-based mechanism to motivate data
acquisition:} {We propose a Stackelberg game model {under incomplete information} that captures
interactions between the master and users in Section{~\ref{sec:Data}}. The
master first announces the total reward to be allocated among
collaborators.
To decide to join or not, each user then estimates other users' decisions in
predicting {the chance of collaboration
success} and his expected reward. {We show that it is better to reward users' collaboration efforts regardless the result of the collaboration. This encourages users to collaborate, and hence increases the chance of collaboration success.}}

\item \emph{Performance of reward-based mechanism:}
{Under complete or symmetrically incomplete information, the master can decide a small reward to attract enough users.} But if users {know
their costs} {while the master does not (asymmetrically
incomplete information)}, the master has to offer a large total
reward to {guarantee enough collaborators}, and users benefit from
{holding private information}. {Perhaps surprisingly, {when the master does not need a large number of collaborators}, he can benefit as the variance of users' cost distribution increases.}

\item \emph{New contract-based mechanism to motivate distributed
computing:} In Section~\ref{sec:compute}, we use contract theory to
study how a master {efficiently} decides different task-reward
combinations for heterogeneous users. {By satisfying individual rationality and incentive compatibility, our contract enables all users to truthfully reveal their private information and maximizes the master's utility.}

\item \emph{Performance of contract-based mechanism:} Under complete information, the
master {involves a user type as long as the master's preference
of the type is larger than {the user cost}. All collaborators
get a zero payoff. {{But if users can hold their private information from the master}, the master will conservatively
target at a smaller group of efficient users with small costs. He has to give most benefits to the collaborators and a collaborator's payoff increases in the computing efficiency.}}

\end{itemize}

\subsection{Related Work}
Our first collaboration model on data acquisition is closely related to the literature on
location-based services (LBS)\cite{rao2003evolution}. In LBS, a
customer needs to report his current location to the database
server in order to receive his desired service. Prior work are
focusing on how to manage data and how customers can safely communicate
with the database
server (e.g., \cite{schiller2004location,chow2006peer,gruteser2003anonymous}),
especially when the massive database has already been built
up. {Other work considered the technical issues of data collection from
users\cite{schiller2004location}.} {Our paper focuses on the master's problem of incentive mechanism design for attracting enough users ({larger than some threshold}) using reward to provide location data, so that the master can build a LBS later on.} {{Only recently people started to look at users' incentives to reveal information}. For example, Yang \emph{et al.} \cite{IncentiveMechanism} also designed incentive mechanisms for involving sensors. The model in \cite{IncentiveMechanism} does not involve the issue of collaboration cost estimation and collaboration  success probability, and the main results were mainly derived through simulations.} {Actually, most Stackelberg game models assume complete information yet this paper focuses on incomplete information.}


Our second collaboration model is relevant to mobile grid computing,
{which integrates} mobile wireless devices into grid computing
(e.g.,
\cite{phan2002challenge,wehner2010mobile,litke2004mobile}).
The {main focus of {mobile grid computing} literature} is on the technical
issues of resource management or load balancing (e.g.,
\cite{wehner2010mobile,litke2004mobile}). Only few results have
considered (mobile) users' incentives issues in joining in
collaboration
\cite{sim2006survey,kwok2007selfish,subrata2008cooperative}.
Kwok \emph{et al.} in \cite{kwok2007selfish} {evaluate the
impact of selfish behaviors of individual users in a Grid.} Subrata
\emph{et al.} in \cite{subrata2008cooperative} {present a Nash
bargaining solution for load balancing among multiple masters.}
Sim in
\cite{sim2006survey}  use a two-player alternating bargaining model
to study collaboration between masters and users. The novelty of our
{model} is that a master interacts with all users
simultaneously {to distribute computing work}, and users are heterogeneous in their computing
efficiencies and costs. {We propose a new contract{-based}
mechanism that maximizes the master's profit.}\footnote{{Our designed contract-based mechanism here {belongs to screening contract category \cite{bolton2005contract}} and is similar to that in our previous work \cite{DuanDySPAN} in methodology, but
that work focuses on a different problem on cooperative spectrum sharing and the derived mechanisms as well as results are significantly different.}} {Unlike our first collaboration model using a total reward to incentivize the same periodic data acquisition, here this contract mechanism aims to assign \emph{different} amounts of work to different user types, and does not require a threshold-based collaboration success.}


\section{{{Collaborations} on Data Acquisition}}\label{sec:Data}
\subsection{System Model of Data
Acquisition}\label{subsec:network_data}

In this application, the master is interested in building up a
database by collecting information f{rom enough smartphone
users. {We consider} a set $\mathcal{N}=\{1,\cdots,N\}$ of
smartphones, and the {total number} $N$ is publicly known.}\footnote{We assume all $N$ users are active. The master
(e.g., Apple) learns the number of active
{users (e.g., iPhones)} by checking {{users'}
usage history, or send users control messages for
status confirmation.} }
{User $i\in\mathcal{N}$ has a collaboration cost {$C_i(p_i,e_i)>0$, which is generally a function (e.g., weighted sum) of his privacy loss $p_i$ and energy consumption $e_i$ illustrated as follows}:
\begin{itemize}
\item \emph{Privacy loss $p_i$:}
By reporting sensitive data (e.g., GPS location coordinates), a user's loss can be psychological worry of losing privacy, discomfort due to frequent annoyance from unwanted advertising in location-based services, or even property loss due to disclosure of bank account information in data reporting (e.g., \cite{karger1995security, langheinrich2001privacy, want1992active}).
\item {\emph{Energy consumption $e_i$:}  Collecting and transmitting data periodically to the master's data center consumes a user's smartphone battery. According to \cite{qian2011profiling} and \cite{pathak2012energy}, the consumed energy depends on the details of the data acquisition task, including the interaction efficiency among various layers (e.g., radio channel state, transport layer, application layer, and user interaction layer). The measurement data (e.g., radio power) for some typical applications and platforms can be found in \cite{qian2011profiling} and \cite{pathak2012energy}.}
\end{itemize}

{We assume that the distributions of users' privacy losses and energy consumptions are independent. As combinations of the two terms, we assume that the collaboration costs are independently and
identically distributed, with a mean {$\mu$} and a cumulative
probability distribution function {$F(\cdot)$}.\footnote{{This assumption makes our analysis tractable to deliver clean engineering insights. Yet our results can be extended to the case where the costs are not identically distributed. Then, for example, a user with a larger mean value of cost distribution is less willing to collaborate.}} We do not impose any further assumptions on the properties of the distribution $F(\cdot)$ in this paper. }


We consider a \emph{threshold revenue model} for the master. {If the master attracts at least $ n_0$ users as collaborators, he will successfully build the database and receive a revenue of $V$. Otherwise, the master does not receive any revenue.} {Such a threshold model has many practical applications. In the example of collecting GPS data to establish a live map of auto traffics, several users' movement information along the same highway will be enough to tell whether the highway is congested or not.}

As shown in Fig.~\ref{fig:twostage}, the master interacts with the users through a two-stage process.
In Stage I, {the master announces $(R,n_{0}$), where $R$ is the
total reward to all users and $n_{0}$ is the {threshold number
of} required collaborators.} In Stage II, {each user chooses to be a collaborator or not.}\footnote{{To support some real-time location-based services (e.g., {maps of live traffic information}), we require users' quick responses {in order to update the database}. Such a quick {collection} of data is feasible now (e.g., like invitation messaging to reveal location in many iPhone apps).  Amazon Mechanical Turk also supports online interaction between masters and users. Some media masters like Google may also have {urgent} needs to report some critical events by asking users {in a certain area to upload photos and videos}.}} {Similar to Amazon Mechanical Turk,\footnote{See the Amazon link \url{https://www.mturk.com/mturk/welcome}} the master here sets up a database with users' account and payment information and can automatically pay each involved user. A user's received reward can be monetary return or some promotion to use the relevant location-based service afterwards.}

{
We want to mention that our model is also applicable to the scenario where users do not join the collaboration simultaneously. As long as users submit their ``sealed'' responses to the master's collaboration invitation and cannot check others' behaviors, our results will remain valid. On the other hand, if a user learns from others' behaviors to determine the current number of committed users, then the analysis of such a dynamic decision evolution becomes very challenging. Some methods about social learning and mean field approximation may be used in the scenario (though clean theoretical results are still hard to obtain) (\!\!\cite{ifrach2012monopoly,weintraub2010computational}).
}

\colorlet{lightgreen}{green!20!yellow} \colorlet{lightblue}{blue!20}
\colorlet{lightred}{red!20}
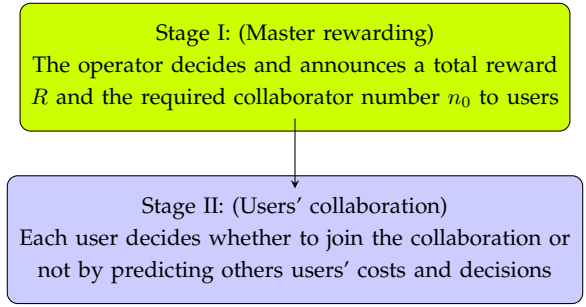
\begin{figure}[tt] 
   \centering
    \begin{tikzpicture}[>=stealth, scale=0.85]
        \draw[fill=lightgreen, rounded corners, transform shape] (-4.3,3) rectangle (4.3,5);
        \draw[fill=lightblue, rounded corners, transform shape] (-4.5,0.3) rectangle (4.5,2.3);
        \draw (0,4.5) node [fill=lightgreen, rounded corners, transform shape] {Stage I: (Master rewarding)};
        \draw (0,4) node [fill=lightgreen, rounded corners, transform shape] {The operator decides and announces a total reward};
        \draw (0,3.5) node (Price) [fill=lightgreen, rounded corners, transform shape] {$R$ and the required collaborator number $n_0$ to users};

        \draw (0,1.8) node (Users') [fill=lightblue, rounded corners, transform shape] {Stage II: (Users' collaboration)};
        \draw (0,1.3) node [fill=lightblue, rounded corners, transform shape] {Each user decides whether to join the collaboration or};
        \draw (0,0.8) node [fill=lightblue, rounded corners, transform shape] {not by predicting others users' costs and decisions};
        \draw[->] (Price.south) -- (Users'.north);
        \end{tikzpicture}
    \caption{\small Stackelberg game between the master and users.} \label{fig:twostage}
\end{figure}

{Assume that there are $n$ out of $N$ users willing to serve as collaborators in Stage II.} There are two models for a collaborator's payoff:
\begin{itemize}
\item \emph{Model (A) (Reward for collaboration effort):} A collaborator $i$'s payoff is
\begin{equation} \label{eq:up2}
\left(\frac{R}{n} - C_i\right) \boldsymbol{1}{_{\{n \geq n_0\}}}.
\end{equation}
where {$\boldsymbol{1}_{\{X\}}$} is the indicator function {(equals 1 when event $X$ is true)}.
{That is, if the collaboration is successful, user $i$ pays his
collaboration cost $C_{i}$, and gets the reward $R/n$ (equally and fairly
shared among $n$ collaborators since they undertake the same task in fixed and periodic data reporting). We can also view $R/n$ as in a \emph{lottery} scenario where each collaborator having equal probability $1/n$ to win the total reward $R$. In this case, $n$ users will
only collaborate if the master notifies them that $n\geq
n_{0}$ and the collaboration will be successful. This means that no users will pay collaboration cost if the collaboration is not successful. Here, we assume
that the master will truthfully inform the collaborators about the
value of $n$.\footnote{In reality, the master may cheat users by announcing a larger value of $n$, then he can give less reward to each actual collaborator. But there are some approaches to prevent this. {For example, there could be a third party to monitor how many collaborators are finally involved and punish the master if cheating is detected.}}}
\item \emph{Model (B) (Reward only with successful collaboration):} A collaborator $i$ receives a payoff
\begin{equation}\label{eq:down2}
\frac{R}{n}\boldsymbol{1}_{\{n \geq n_0\}}-C_i.
\end{equation}
That is, collaborator $i$ always pays his collaboration cost $C_i$, and will get the reward $R/n$ only if the collaboration is successful. This model considers that collaborators will contribute before they know the value of $n$ (which will be announced to them by the master after data acquisition).
\end{itemize}



{In both model, the master obtains a profit of}
\[
(V - R)\boldsymbol{1}_{\{n \geq n_0\}}.
\]



For illustration purpose, we now only focus on Model (A) in this section. The discussion of Model (B) can be found in Appendix~{F}. It should be noted that users under Model (A) are more
willing to collaborate than under Model (B), which is not surprising since they face a lower risk in Model
(A). The master also prefers Model (A) to Model (B) since he needs to compensate lower risk and fewer cost to motivate users' collaboration.

The {\em collaboration game} is a two-stage Stackelberg
game, {and we would like to characterize the subgame perfect equilibrium (SPE) that specifies players' stable choices in all stages \cite{gibbons1992game}.} {The way to analyze Stackelberg game
is backward induction. {The master in Stage I and users in Stage II are risk-neutral and want to maximize their own payoffs, repectively.} We will first analyze Stage II, where the
users play a game among themselves based on the value of the reward
$R$ and the threshold $n_{0}$.\footnote{We consider that each user
will join the collaboration as long as his payoff is nonnegative. Yet our results can still be generalized to
the case where users have positive reserve payoffs.}
Users reach a \emph{Nash equilibrium} {(NE)} in this stage, if no user can improve
his payoff by changing his strategy {(collaborate or not)} unilaterally. The Nash equilibrium in
{Stage II leads to}  a collaboration success probability $P(n
\geq n_0| R)$. As we will see, there may be multiple Nash equilibria
in Stage II. Then we study Stage I, where the master chooses the
value of $R$ to maximize his expected profit $(V - R)P(n \geq n_0|
R)$. These two-step analysis {enables us} to {obtain} an SPE of the whole two-stage
collaboration game.

Next we {will analyze the Stackelberg game, and }{study
how the master's {and the users'} information about the collaboration costs
will affect the outcome.}
}

\vspace{-5pt}
\subsection{Collaboration under Complete
Information}\label{subsec:symm_data}

{{We first consider the complete information scenario, where the
master and all users know the cost $C_i$ of every user
$i\in\mathcal{N}$.} {{This is possible only in some special cases where the master and users have extensive prior collaboration experiences.} The main reason for studying this model is to provide a performance benchmark for later discussions of  more realistic incomplete information scenarios. }

{{We assume that no two users have
the exactly same cost. {Our results also apply to the case of homogeneous users, where we can randomly break the tie among homogeneous users at the boundary. This will lead to more than one equilibrium.}}} Without loss of generality, we reorder the users' costs in ascending order, i.e., $C_1<C_2<...<C_N$ and $C_{n_0}$ is the $n_0$th smallest cost. The {equilibrium} of the collaboration game
is as follows.

\begin{theorem}[Collaboration under Complete Information]\label{thm:data_complete}
Recall that $C_{n_0}$ is the $n_0$-th smallest collaboration cost among all $N$
users. The collaboration game admits the following unique \emph{pure} strategy
SPE.
\begin{itemize}
\item If $V < n_0C_{n_0}$, then {the master does not want to initiate the collaboration in Stage I and sets $R^*=0$.  No user will become collaborator in Stage II.}

\item {If $V \geq n_0C_{n_0}$, the master offers a reward $R^*=n_0C_{n_0}$ in Stage I. In Stage II, every user $i$ with $C_i \leq C_{n_0}$ collaborates and obtains a {nonnegative} payoff $C_{n_0} - C_i$, and the remaining $N-n_0$ users decline to collaborate and get a zero payoff. The profit of the master is $V - n_0C_{n_0}$.}
\end{itemize}

\end{theorem}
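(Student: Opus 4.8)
The plan is to solve the two-stage game by backward induction, exactly as the excerpt prescribes: first fix the master's announcement $(R,n_0)$ and determine the users' Nash equilibria in Stage~II, then optimize the master's profit over $R$ in Stage~I. Throughout I would invoke the behavioral convention recorded in the footnote, that a user collaborates whenever his resulting payoff is nonnegative; under complete information every user can compute the continuation, so this convention fixes how indifferences are resolved.

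For Stage~II I would first derive a lower bound on the reward needed to sustain a \emph{successful} collaboration. Only the $n_0-1$ users $1,\dots,n_0-1$ have cost strictly below $C_{n_0}$, so any set of $n\ge n_0$ collaborators must contain at least one user with cost $\ge C_{n_0}$. Since the per-capita reward $R/n$ is nonincreasing in $n$ and hence at most $R/n_0$, that user's payoff is at most $R/n_0-C_{n_0}$. Thus if $R<n_0C_{n_0}$ he strictly prefers to drop out, no profile with $n\ge n_0$ survives, and the only Stage~II outcome is the all-decline profile (under Model~(A) a lone, threshold-missing contributor incurs no cost and is merely indifferent, so has no profitable deviation). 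If instead $R=n_0C_{n_0}$, the same per-capita/pigeonhole bound rules out any equilibrium with $n>n_0$, so a successful equilibrium has exactly $n_0$ collaborators, each requiring $C_i\le R/n_0=C_{n_0}$, which forces the collaborator set to be precisely $\{1,\dots,n_0\}$. A direct check confirms this is an equilibrium: user $n_0$ earns exactly $0$ and stays by the tie-break, users $1,\dots,n_0-1$ earn $C_{n_0}-C_i>0$, and each declining $j>n_0$ would earn $R/(n_0+1)-C_j<0$ since $C_j>C_{n_0}>n_0C_{n_0}/(n_0+1)$, so none wants to join.

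In Stage~I I would then compare the master's two relevant options. To trigger the revenue $V$ he must induce a successful collaboration, which by the above requires $R\ge n_0C_{n_0}$; as his profit $V-R$ is strictly decreasing in $R$, the cheapest such choice is $R=n_0C_{n_0}$, giving profit $V-n_0C_{n_0}$. Any $R$ failing to reach the threshold yields profit $0$, attained e.g.\ by $R^*=0$. Comparing $V-n_0C_{n_0}$ against $0$ produces the stated dichotomy at $V=n_0C_{n_0}$, together with the equilibrium contributions, the payoffs $C_{n_0}-C_i$, and the master's profit $V-n_0C_{n_0}$.

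The step I expect to be the main obstacle is the \emph{uniqueness} claim, because Stage~II exhibits the usual coordination multiplicity of threshold games: when $n_0\ge 2$ the all-decline profile is always a weak Nash equilibrium, since a single deviator cannot reach $n_0$ and is left indifferent, and even in the successful profile the marginal user $n_0$ is exactly indifferent. Handling this cleanly is where the nonnegative-payoff tie-breaking convention, combined with the master's ability to coordinate users under complete information, does the real work: I would argue that these jointly select the participation equilibrium whenever it exists and eliminate the degenerate all-decline outcome, so that the pure-strategy SPE is unique as claimed. I would also make sure the arguments ruling out $R>n_0C_{n_0}$ and any $n>n_0$ collaborators are watertight, since they are what exclude alternative successful equilibria.
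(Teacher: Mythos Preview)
Your proposal is correct and follows essentially the same backward-induction route as the paper: check unilateral deviations to verify that the profile with collaborators $\{1,\dots,n_0\}$ at $R=n_0C_{n_0}$ is a Stage~II equilibrium, then argue in Stage~I that any smaller $R$ fails to reach the threshold while any larger $R$ is dominated for the master. Your pigeonhole argument for the lower bound $R\ge n_0C_{n_0}$ and your exclusion of $n>n_0$ are in fact more explicit than the paper's appendix, which simply asserts that a reward below $n_0C_{n_0}$ ``is not enough to compensate even $n_0$ users with smallest costs'' and that a larger reward gives the master ``incentive to decrease.'' You are also right that the coordination multiplicity (the all-decline profile being a weak Nash equilibrium of Stage~II) is the delicate point; the paper's proof handles uniqueness only at the level of the master's reward $R^*$ and implicitly relies on the same tie-breaking convention you invoke, so your plan to lean on that convention is exactly what is needed and not an overreach.
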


%
%

The proof of Theorem~\ref{thm:data_complete} is given in Appendix~{A}.

We can show that users will not benefit from using a
mixed-strategy.
{But this may not be the case with
symmetrically incomplete information.}

\vspace{-5pt}
\subsection{{Collaboration under {Symmetrically} Incomplete Information}}\label{subsec:noinfo_data}

Now we consider the symmetrically incomplete information scenario,
where {both} the master and the users only know the cumulative probability
distribution function $F(\cdot)$ of the collaboration costs with the mean $\mu$.\footnote{{The master can estimate $F(\cdot)$ by learning from his collaboration history or making a customer survey. A user can estimate $F(\cdot)$ by checking his or other users' collaboration experiences. There are many public sources (e.g., the master's or some third party's market or customer surveys) that help a user's cost estimation \cite{schiller2004location,2015}.}} {A
user $i$ even does not know the precise value of his own cost
$C_{i}$.\footnote{{It is sometimes difficult for a user to know
his precise loss of privacy before an actual security threat happens to him.} {Users may face many possible security threats by losing sensitive information, e.g., direct property loss or advertising harassment.}} {In this case, we can view all users as homogeneous.}}

\vspace{-5pt}
\subsubsection{Analysis of Stage II}

It turns out that there are multiple {equilibria} of the
collaboration game in Stage II as follows.

\begin{theorem}\label{them2}\emph{(Stage II under {Symmetrically} Incomplete
Information):} Stage II admits the following Nash equilibria:\
\begin{itemize}
{ {\item \emph{(No collaboration):} If $R<n_0\mu$, no user
will collaborate at any equilibrium in Stage II.}
\item\emph{(Pure strategy NE):} If $n_0\mu \leq R<N\mu$ {(where $N\mu$ is the product of user number $N$ and the mean $\mu$ of user cost distribution)}, $n^*=\lfloor
\frac{R}{\mu}\rfloor$ users choose to collaborate and the remaining users decline.
A subset of $n^*$-out-of-$N$ users is randomly picked up
among $\left(\begin{array}{c}N \\ n^*\end{array}\right)$ possible
subsets. {Thus there exist multiple pure NEs in this
case.}\footnote{How to select one NE is out of the scope of this paper, and can be referred to \cite{gibbons1992game}. {Yet it should be noted that all pure strategy Nash equilibria lead to the same performance for the master as shown in Theorem~\ref{thm:data_symm} later on.}} If
{$R\geq N\mu$}, all $N$ users will collaborate.

\item \emph{(Mixed strategy NE):} If $n_0\mu<R<N\mu$, every user
collaborates with a probability $p^*$, which is the unique solution to
\begin{equation}
\mathbb{E}_{m}\left(\left(\frac{R}{m+1}-\mu\right)\boldsymbol{1}_{\{m+1\geq
n_0\}}\right)=0,
\end{equation}
{where the expectation $\mathbb{E}$ is taken over the random variable $m$ which follows a binomial}
distribution $B(N-1,p)$.}
%
%
%

\label{thm:StageII_NoInfo}
\end{itemize}


\end{theorem}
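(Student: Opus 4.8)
The plan is to exploit that under symmetrically incomplete information every user is effectively homogeneous: since a user does not know his own realized cost, he evaluates collaboration using the common expected cost $\mu$. The game then reduces to a symmetric $N$-player binary-action game in which a user who collaborates alongside $n-1$ others receives the expected payoff $(\frac{R}{n}-\mu)\mathbf{1}_{\{n\geq n_0\}}$, while declining yields $0$. I would organize all three regimes around the single auxiliary function $g(p)=\mathbb{E}_m[(\frac{R}{m+1}-\mu)\mathbf{1}_{\{m+1\geq n_0\}}]$ with $m\sim B(N-1,p)$, the expected gain of a tagged user from collaborating when each of the other $N-1$ users collaborates independently with probability $p$.

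For the pure-strategy NE I would run a direct deviation check on a profile in which a set of $n$ users collaborates. If $n\geq n_0$, a collaborator has no incentive to leave iff $R/n\geq\mu$ and a decliner has no incentive to enter iff $R/(n+1)\leq\mu$; together these force $n\mu\leq R\leq(n+1)\mu$, i.e. $n=\lfloor R/\mu\rfloor$ generically (both adjacent integers being admissible at the non-generic boundary $R/\mu\in\mathbb{Z}$). Such an $n$ is feasible precisely when $n_0\leq\lfloor R/\mu\rfloor$, i.e. $R\geq n_0\mu$, and the choice of which $n^*$ users collaborate yields $\binom{N}{n^*}$ equilibria; when $R\geq N\mu$ the same inequalities are satisfied by $n=N$, so all users collaborate. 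When $R<n_0\mu$, no profile with $n\geq n_0$ can meet $R\geq n\mu$, so no successful-collaboration profile is an equilibrium, while any profile with $n<n_0$ gives zero to everyone and is payoff-equivalent to universal declining; hence no user collaborates.

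For the mixed NE I would impose the indifference condition that randomizing requires the collaboration gain to vanish, i.e. $g(p)=0$. Factoring out the conditional success probability gives $g(p)=Q(p)\,(\Phi(p)-\mu)$, where $Q(p)=\Pr(B(N-1,p)\geq n_0-1)>0$ on $(0,1)$ is the success probability given the tagged user collaborates, and $\Phi(p)=R\,\mathbb{E}[\frac{1}{m+1}\mid m\geq n_0-1]$ is the conditional expected reward per collaborator; thus $g(p)=0$ is equivalent to $\Phi(p)=\mu$. Evaluating the boundary behavior gives $\Phi(0^+)=R/n_0$ (the conditional mass concentrating on the minimal feasible total $n_0$) and $\Phi(1)=R/N$, so a crossing of the level $\mu$ can occur only when $R/N<\mu<R/n_0$, i.e. $n_0\mu<R<N\mu$, matching the stated regime and simultaneously confirming that for $R<n_0\mu$ one has $\Phi<\mu$ throughout, so declining is dominant and $p^*=0$.

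The crux—and the step I expect to be the main obstacle—is uniqueness of $p^*$, which I would obtain by showing $\Phi$ is strictly decreasing on $(0,1)$. The clean route is to note that the binomial family $B(N-1,p)$ is ordered by the monotone likelihood ratio in $p$, a property preserved under conditioning on the upper event $\{m\geq n_0-1\}$; hence the conditional law of $m$ is stochastically increasing in $p$. Since $\frac{1}{m+1}$ is strictly decreasing in $m$ and the conditional support has at least two points whenever $n_0<N$ (which holds on the whole nonempty mixed regime), its conditional expectation, and therefore $\Phi$, strictly decreases. Combined with the intermediate value theorem and the limits above, this gives a unique root $p^*\in(0,1)$ exactly when $n_0\mu<R<N\mu$. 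A more computational alternative would be to differentiate via $g'(p)=(N-1)\,\mathbb{E}_{B(N-2,p)}[f(m+1)-f(m)]$ and argue a single downcrossing, but the stochastic-dominance argument is cleaner and avoids the sign bookkeeping of that difference.
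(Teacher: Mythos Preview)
Your treatment of the no-collaboration and pure-strategy cases is essentially the paper's own argument (Appendix~B): both you and the paper run the same unilateral-deviation check on a candidate profile with $n$ collaborators, obtaining the two inequalities $R/n\geq\mu$ and $R/(n+1)<\mu$ that pin down $n^*=\lfloor R/\mu\rfloor$ (the paper phrases it as a contradiction argument, but the content is identical).

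Where you diverge is in the mixed-strategy part. The paper itself only derives the indifference condition $u(R,p^*)=0$ in the body text and then \emph{asserts} existence and uniqueness of $p^*\in(0,1)$ on $n_0\mu<R<N\mu$ without a dedicated proof; the closest thing to a uniqueness argument in the paper is the derivative computation in Appendix~C, but that is for the analogous threshold $\gamma^*$ of Theorem~4 (asymmetric information), and it proceeds by differentiating, assuming two roots, and deriving a sign contradiction. Your route---factor $g(p)=Q(p)(\Phi(p)-\mu)$, reduce to $\Phi(p)=\mu$, and obtain strict monotonicity of $\Phi$ from the monotone likelihood ratio ordering of $B(N-1,p)$ (which survives truncation to $\{m\geq n_0-1\}$) applied to the decreasing function $1/(m+1)$---is genuinely different and cleaner: it yields strict monotonicity globally on $(0,1)$, so existence and uniqueness drop out immediately from the endpoint values $\Phi(0^+)=R/n_0$ and $\Phi(1)=R/N$, with no delicate sign bookkeeping. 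The cost is that your argument presupposes the reader knows the MLR property of the binomial and its preservation under tail conditioning; the paper's computational alternative (for the related object in Appendix~C) is more self-contained but considerably messier.
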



The proof of no collaboration and pure strategy NE are given in Appendix~{B}.

{We note that the pure and mixed strategy equilibria in Theorem
\ref{them2} {share a common parameter range,} $n_{0}\mu<R<N\mu$.}
{It should also be noted that the master is not interested in selecting a certain NE since all NEs give him the same performance. Furthermore, Theorem~\ref{thm:data_symm} will show that the master will not encourage any mixed NE at the first place.
}




{

Next we show how the \emph{mixed strategy NE} $p^*$ is derived.} As
all users have the same {statistical} information, we will
focus on the symmetric mixed Nash equilibrium. Assume that {all
users} collaborate with a probability $p$. {Consider that there are $m$ users {(other than $i$)} who
collaborate.} If user $i$ collaborates,
his {expected} payoff is
\[
u (R, p) :=  \mathbb{E}_{m}\left(\left(\frac{R}{m + 1}  - \mu\right)
\boldsymbol{1}_{\{m + 1 \geq n_0\}}\right),
\]
where the expectation is taken over $m$ which
follows a binomial distribution $B(N-1,p)$, and is independent {of user $i$'s decision.}

{Given all the other $N-1$ users collaborate with the equilibrium probability $p^{\ast}$, user $i$'s payoffs by choosing to collaborate or not are the same. Thus $p^*$ should satisfy}
\[
u (R,{p}^*)=0,
\]
{and is} a function of $R$. Thus we can rewrite $p^*$ as $p^*(R)$. One
can show that there exists a mixed strategy Nash equilibrium
$p^*(R)\in(0,1)$ as long as $n_0 \mu<R<N\mu$. Note that $R\leq n_0\mu$ leads to $p^*(R)=0$, which is
not a mixed strategy. Also, $R\geq N\mu$ leads to $p^*(R)=1$, which is not a mixed strategy either.

\vspace{-5pt}
\subsubsection{Analysis of Stage I}

{First we consider the case where users use the mixed strategy
in Theorem \ref{them2} and collaborate with probability $p^*(R)$.}
The master's expected profit is then
\[
f(R) := \mathbb{E}_n\left( (V - R) \boldsymbol{1}_{\{n \geq
n_0\}}\right),
\]
where the expectation is taken over $n$ which follows a binomial
distribution $(N, p^*(R))$. One can show that $f(R)$ {has} a
unique maximum $f(R^*)$, which is positive when $V
> n_0 \mu$. {However, under {$n_{0}\mu<R<N\mu$} there is always a chance that there are less than $n_{0}$ users choosing to collaborate under the mixed strategy.}
Thus the
master may want to avoid this.  
Theorem~\ref{them2} shows that by choosing $R=n_0\mu$, the master
can guarantee $n_0$ collaborators with a pure strategy
Nash equilibrium in Stage II. {Any reward value lower than $n_0\mu$ leads to no collaboration, and any value larger than $n_0\mu$ involves a number of collaborators that is larger than necessary (in a pure strategy NE) or does not guarantee enough collaborators (in a mixed strategy NE). As the master's payoff decreases in reward given enough collaborators,} {we have the following
result.}


\begin{theorem}\label{thm:data_symm}\emph{({Stage I under Symmetrically Incomplete
Information}:)} The collaboration game admits the following unique
SPE.

\begin{itemize}
\item If $V<n_0\mu$, the master will not initiate the collaboration
and will choose $R^*=0$.

\item If $V \geq n_0 \mu$, the master will announce a reward
$R^*=n_0\mu$. {A set of} $n_0$ users will collaborate in Stage
II. The collaborators achieve a zero expected payoff, and the master achieves a
profit $V - n_0 \mu$.
\end{itemize}

\end{theorem}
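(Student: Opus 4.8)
The plan is to complete the backward induction by substituting the Stage~II equilibrium characterization of Theorem~\ref{them2} into the master's Stage~I objective and then optimizing over the reward~$R$. For every $R$ the master's expected profit factors as $(V-R)\,P(n\ge n_0\mid R)$, so I only need the success probability $P(n\ge n_0\mid R)$ on each reward regime together with the monotone term $V-R$. Theorem~\ref{them2} supplies exactly this: $P=0$ when $R<n_0\mu$; $P=1$ in the pure-strategy equilibrium as soon as $R\ge n_0\mu$, since $\lfloor R/\mu\rfloor\ge n_0$ then guarantees at least $n_0$ collaborators; and $0<P<1$ in the mixed-strategy equilibrium on the overlap region $n_0\mu<R<N\mu$.

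First I would analyze the pure-strategy branch. For $R\ge n_0\mu$ collaboration succeeds with certainty, so the profit is simply $V-R$, which is strictly decreasing; it is therefore maximized at the left endpoint $R=n_0\mu$, where $n^*=\lfloor n_0\mu/\mu\rfloor=n_0$ users collaborate and the master earns $V-n_0\mu$. For $R<n_0\mu$ no one collaborates and the profit is $0$. Comparing $V-n_0\mu$ with $0$ then splits into the two claimed cases: if $V<n_0\mu$ the master stays out with $R^*=0$, and if $V\ge n_0\mu$ he sets $R^*=n_0\mu$ and earns $V-n_0\mu\ge0$. The zero-payoff claim is immediate at this point, because each of the $n_0$ collaborators receives $R/n_0=\mu$ and bears expected cost $\mu$, for a net expected payoff of $\mu-\mu=0$.

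The substantive step, and the one I expect to be the main obstacle, is showing that the master cannot improve on $R=n_0\mu$ by targeting the mixed-strategy equilibrium or by exploiting the coexistence of equilibria on $n_0\mu<R<N\mu$. For this I would note that the mixed equilibrium exists only for $R>n_0\mu$, where already $V-R<V-n_0\mu$, and there it yields $P(n\ge n_0\mid R)<1$ because a shortfall of collaborators occurs with positive probability; the product of a strictly smaller margin and a strictly smaller success factor is strictly below $V-n_0\mu$. The same holds for any pure equilibrium with $R>n_0\mu$, which succeeds surely but pays only $V-R<V-n_0\mu$. Hence no reward in the overlap region beats $R=n_0\mu$ under any selection of the Stage~II equilibrium, so the optimum is robust to the multiplicity. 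The decisive detail is that at the boundary $R=n_0\mu$ the mixed equilibrium degenerates (its collaboration probability collapses to $p^*=0$, which is not a genuine mixed strategy), leaving the pure equilibrium with exactly $n_0$ collaborators as the operative Stage~II outcome once the convention that an indifferent user with nonnegative payoff collaborates is invoked; this pins down a unique SPE. The only residual subtlety is the knife-edge $V=n_0\mu$, where $R^*=0$ and $R^*=n_0\mu$ both give zero profit and one adopts the tie-breaking rule that the master initiates whenever his profit is nonnegative.
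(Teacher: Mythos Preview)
Your proposal is correct and follows essentially the same backward-induction argument the paper gives inline just before stating the theorem: use Theorem~\ref{them2} to read off the success probability in each reward regime, observe that once $R\ge n_0\mu$ the pure-strategy outcome guarantees success so the profit $V-R$ is maximized at $R=n_0\mu$, and verify that the mixed-strategy branch on $(n_0\mu,N\mu)$ is strictly dominated because it couples a smaller margin $V-R$ with a success probability strictly below one. If anything, you are more explicit than the paper about the equilibrium-selection robustness and the knife-edge cases at $R=n_0\mu$ and $V=n_0\mu$.
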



\vspace{-5pt}
\subsection{Collaboration under Asymmetrically Incomplete Information}\label{subsec:asymm_data}
In this subsection, we study the case where each user $i$ knows his
own exact cost $C_i$, but not other users' costs. The
master only knows $F(\cdot)$.\footnote{{This is possible when users already learn about their private costs over time, yet the master may not be able to track and collect these sensitive information.}}}

\subsubsection{Analysis of Stage II}
We have the following result for Stage II.

\begin{theorem}\label{them4}\emph{(Stage II under Asymmetrically Incomplete Information):}
A user $i$ will collaborate if and only if $C_i\leq \gamma^*(R)$.
The {common equilibrium} decision threshold $\gamma^*(R)$ is the
unique solution of {$\Phi(\gamma)=0$}, where
\begin{equation}\label{eq:StageII_asymm}
\Phi(\gamma):= \mathbb{E}_{m}\left(\left(\frac{R}{m+1}-\gamma\right)
\boldsymbol{1}_{\{m+1 \geq n_0\}}\right),
\end{equation}
and the expectation is taken over $m$ which follows a binomial
distribution $B(N-1,F(\gamma))$. The equilibrium $\gamma^*(R)$
satisfies $\frac{R}{N}<\gamma^*(R)<\frac{R}{n_0}$.\footnote{{We can
also show that a user will not be better off by changing from the
current pure strategy to any mixed strategy.}}
%
\end{theorem}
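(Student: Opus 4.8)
The plan is to show that the Stage-II equilibrium must be a symmetric threshold rule and that the indifference condition pins the threshold down uniquely. First I would reduce the problem to threshold strategies. Suppose every user $j\neq i$ collaborates exactly when $C_j\leq\gamma$; since costs are i.i.d.\ with c.d.f.\ $F$, each such user collaborates independently with probability $F(\gamma)$, so the number $m$ of collaborating opponents follows $B(N-1,F(\gamma))$. Against this, user $i$'s expected payoff from collaborating is
\[
U(C_i,\gamma) := \mathbb{E}_m\!\left(\left(\frac{R}{m+1}-C_i\right)\boldsymbol{1}_{\{m+1\geq n_0\}}\right),
\]
while declining yields $0$. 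Because $\partial U/\partial C_i=-\,\mathbb{E}_m(\boldsymbol{1}_{\{m+1\geq n_0\}})=-P(m+1\geq n_0)\leq 0$, the map $C_i\mapsto U(C_i,\gamma)$ is strictly decreasing whenever $F(\gamma)>0$, so user $i$'s best response is itself a threshold rule. In a symmetric equilibrium the marginal type $C_i=\gamma$ must be indifferent, i.e.\ $U(\gamma,\gamma)=\Phi(\gamma)=0$; conversely, if $\Phi(\gamma)=0$ then $U(\cdot,\gamma)>0$ strictly below $\gamma$ and $<0$ strictly above $\gamma$, so the threshold-$\gamma$ profile is a mutual best response. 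Hence the equilibria correspond exactly to the roots of $\Phi$, and the strict monotonicity of $U(\cdot,\gamma)$ also disposes of the footnote's claim that no mixed deviation helps.

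Next I would establish existence together with the bounds by evaluating $\Phi$ at the two candidate endpoints. At $\gamma=R/n_0$ every summand with $m+1\geq n_0$ satisfies $R/(m+1)\leq R/n_0$, strictly so whenever $m+1>n_0$; since $P(m\geq n_0)>0$ when $N>n_0$, this yields $\Phi(R/n_0)<0$. At $\gamma=R/N$ every summand has $R/(m+1)\geq R/N$, strictly so whenever $m+1<N$, yielding $\Phi(R/N)>0$. As $\Phi$ is continuous in $\gamma$ through $F$ (for a continuous $F$; at an atom the threshold is read off by the usual tie-breaking), the intermediate value theorem produces a root $\gamma^*\in(R/N,R/n_0)$. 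The bounds also follow transparently from the observation that at any root
\[
\gamma = R\,\frac{\mathbb{E}_m\!\left(\frac{1}{m+1}\boldsymbol{1}_{\{m+1\geq n_0\}}\right)}{\mathbb{E}_m\!\left(\boldsymbol{1}_{\{m+1\geq n_0\}}\right)} = R\,\mathbb{E}_m\!\left(\frac{1}{m+1}\,\Big|\,m+1\geq n_0\right),
\]
which is a weighted average of the values $R/n_0,\dots,R/N$ and hence strictly between $R/N$ and $R/n_0$ whenever $n_0<N$ and $F(\gamma)\in(0,1)$.

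The main obstacle will be uniqueness, because $\Phi$ need not be monotone in $\gamma$ term by term: the factor $\tfrac{1}{m+1}\boldsymbol{1}_{\{m+1\geq n_0\}}$ is not a monotone function of $m$, so raising $F(\gamma)$ can increase the reward term $R\,\mathbb{E}_m(\tfrac{1}{m+1}\boldsymbol{1}_{\{m+1\geq n_0\}})$ as well as the cost term $\gamma\,P(m+1\geq n_0)$. I would circumvent this with the conditional rewrite above: on the range where $P(m+1\geq n_0)>0$, the roots of $\Phi$ are exactly the fixed points $\gamma=H(\gamma)$, where $H(\gamma):=R\,\mathbb{E}_m(\tfrac{1}{m+1}\mid m+1\geq n_0)$ with $m\sim B(N-1,F(\gamma))$. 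After conditioning on $\{m+1\geq n_0\}$ the integrand $1/(m+1)$ is decreasing in $m$, and the binomial family $B(N-1,q)$ has the monotone-likelihood-ratio property in $q$, which is preserved under conditioning on the upper set $\{m\geq n_0-1\}$; hence the conditional law is stochastically increasing in $q=F(\gamma)$, so $H$ is nonincreasing in $\gamma$. Consequently $\gamma\mapsto\gamma-H(\gamma)$ is strictly increasing and has at most one zero, which together with the existence argument gives the unique equilibrium threshold $\gamma^*(R)$ satisfying $R/N<\gamma^*(R)<R/n_0$.
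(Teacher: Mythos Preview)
Your argument is correct, and its overall architecture matches the paper's: you first justify the threshold best-response structure, then show $\Phi(R/N)>0$ and $\Phi(R/n_0)<0$ to obtain existence and the bounds via the intermediate value theorem, and finally argue uniqueness. The endpoint evaluations and the IVT step are essentially the paper's own.

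Where you genuinely diverge is in the uniqueness step. The paper argues by contradiction: if $\Phi$ had two zeros, then at one of them $\partial\Phi/\partial\gamma\geq 0$; it then computes this derivative explicitly (differentiating each binomial term in $F(\gamma)$), substitutes the relation $\Phi(\gamma^*)=0$ back in to cancel pieces, and concludes that the derivative is strictly negative at any root. Your route is more conceptual: you rewrite any root as the fixed point $\gamma=H(\gamma)=R\,\mathbb{E}\!\left[\tfrac{1}{m+1}\mid m+1\geq n_0\right]$ and observe that the binomial MLR property, preserved under truncation to $\{m\geq n_0-1\}$, makes the conditional law of $m$ stochastically increasing in $F(\gamma)$; since $1/(m+1)$ is decreasing, $H$ is nonincreasing and $\gamma-H(\gamma)$ is strictly increasing, giving at most one root. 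This avoids the derivative computation entirely and, as a bonus, the same conditional-expectation identity immediately delivers the bounds $R/N<\gamma^*<R/n_0$ as a weighted-average statement. The paper's approach, on the other hand, yields the local sign of $\partial\Phi/\partial\gamma$ at the root, which it reuses later (via the implicit function theorem) to prove the comparative statics of Theorem~\ref{thm:gamma_R_n0_N}; your method would need a separate argument there.
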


\begin{figure}[t]
\centering
\includegraphics[width=0.41\textwidth]{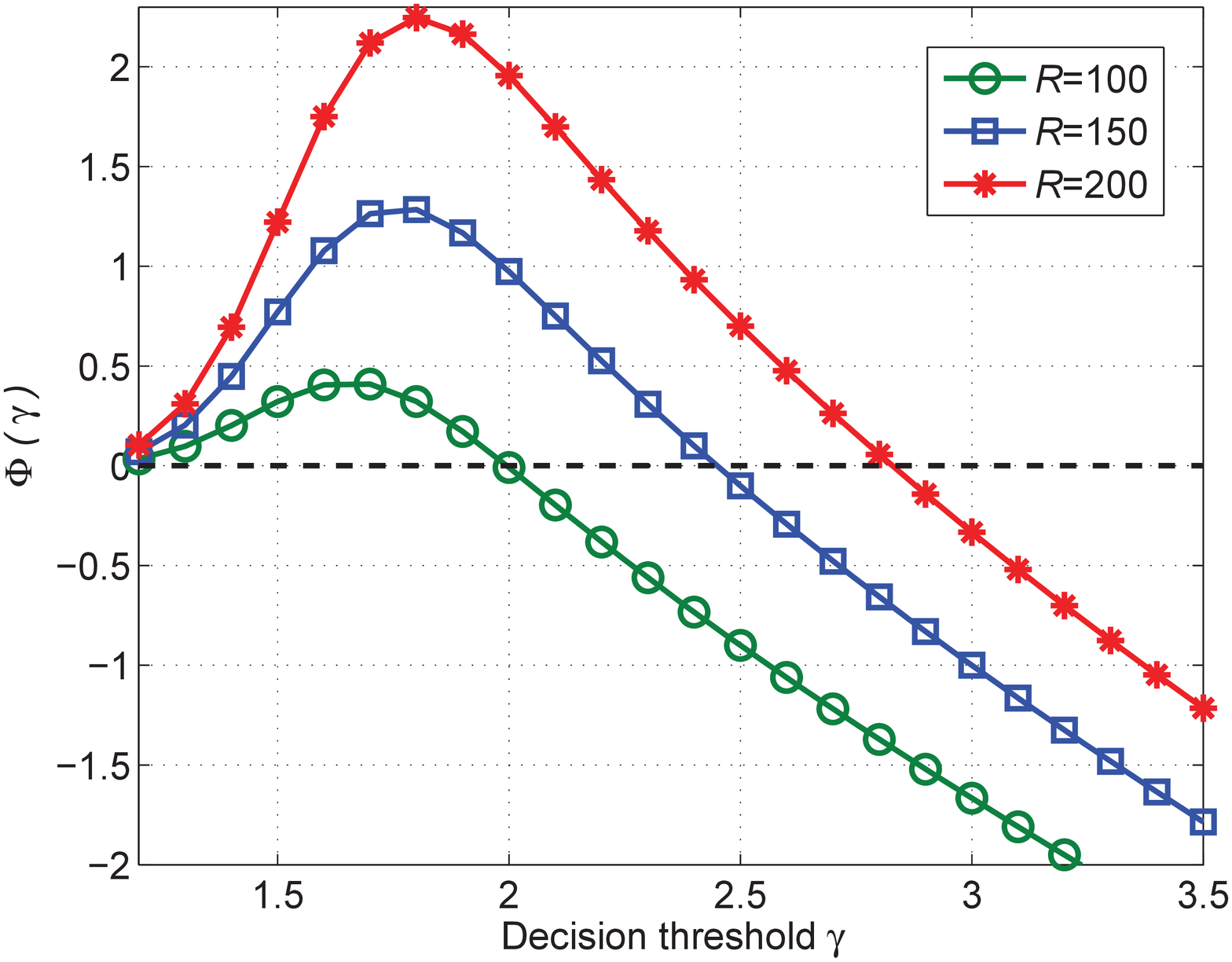}
\caption{\small $\Phi(\gamma)$ as a function of $\gamma$ and $R$. Other
parameters are $n_0=40$ and $N=100$. We consider a uniform cost
distribution with $F(\gamma)=\min(\gamma/4,1)$. {The threshold values $\gamma^*(R)$ for the three different values of $R$ (in ascending order) values are 2.0, 2.4, and 2.8, respectively.}} \label{fig:Phi_rho}
\end{figure}

{As (\ref{eq:StageII_asymm}) is independent of $C_i$ when each user $i$ makes his decision, all users have the same decision threshold. The intuition is that each user has the same information and estimation about others. But different users would still make different decisions as their private information about their own collaboration costs are different.}

To see why Stage II has the pure NE in Theorem~\ref{them4}, we consider
that all users
other than $i$ collaborate if and only if their costs are less than
some threshold $\gamma> 0$.
If user $i$
collaborates, his payoff is
\[
\left(\frac{R}{m+1}-C_i\right)\boldsymbol{1}_{\{m+1\geq n_0\}},
\]
where $m$ follows a binomial distribution $B(N - 1, F(\gamma))$ and
represents the number of users (other than $i$) who collaborate.
(Recall that cdf $F(\gamma) = P(C_i \leq \gamma)$.) Accordingly, the
expected payoff of user $i$ if he collaborates is
\begin{equation}\label{Eq:exptedprofitD}
\mathbb{E}_{m}\left(\left(\frac{R}{m+1}-C_i\right)\boldsymbol{1}_{\{m
+ 1 \geq n_0\}}\right),
\end{equation}
and zero otherwise. {At the Nash equilibrium, (\ref{Eq:exptedprofitD}) should equal to $0$ when $C_i =
\gamma$. That is, having the common collaboration threshold $\gamma$ is a
Nash equilibrium if and only if} $\Phi(\gamma) = 0$.
{We denote the
solution to $\Phi(\gamma)=0$ in (\ref{eq:StageII_asymm}) as $\gamma^*(R)$. In Appendix~{C}, we prove that
there always exists a unique $\gamma^*(R)$, which satisfies $\frac{R}{N}<\gamma^*(R)<\frac{R}{n_0}$.}

%
%

Figure~\ref{fig:Phi_rho} shows $\Phi(\gamma)$ as a function of both
$\gamma$ and $R$. The solution $\gamma^*(R)$ to $\Phi(\gamma)=0$ is
always unique and satisfies
$\frac{R}{N}<\gamma^*(R)<\frac{R}{n_0}$.
When $R=100$, for example, we have $\gamma^*(R)=2$, which
is larger than $R/N=1$ and is smaller than
$R/n_0=2.5$. {It is also interesting to notice that all users share the same decision threshold $\gamma^{\ast}(R)$ although they have different costs.}

\begin{theorem}\label{thm:gamma_R_n0_N}
The equilibrium decision threshold $\gamma^*(R)$ increases in $R$,
and decreases in $N$ and $n_0$.
\end{theorem}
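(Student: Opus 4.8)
The plan is to treat $\gamma^*$ as the unique root of $\Phi(\gamma)=0$ and to read off its comparative statics from how the graph of $\Phi$ shifts when a parameter is perturbed. The enabling fact, established in Appendix~C (it also underlies the uniqueness in Theorem~\ref{them4}), is that $\Phi(\cdot)$ is continuous and strictly decreasing in $\gamma$ and hence crosses zero exactly once. Consequently, if perturbing a parameter raises $\Phi(\gamma^*)$ above $0$, then, since the perturbed $\Phi$ is still strictly decreasing, its new root lies to the right of $\gamma^*$; symmetrically, lowering $\Phi(\gamma^*)$ below $0$ moves the root left. So for each of $R$, $n_0$, $N$ it suffices to sign the induced change in $\Phi$, evaluated at the old equilibrium threshold.

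For $R$ this is immediate: at fixed $\gamma$ (hence fixed $F(\gamma)$), $\Phi$ is affine and strictly increasing in $R$, since the coefficient of $R$ is $\mathbb{E}_m\!\left(\frac{1}{m+1}\boldsymbol{1}_{\{m+1\ge n_0\}}\right)>0$. Raising $R$ therefore raises $\Phi(\gamma^*)$ and pushes the root up, so $\gamma^*$ increases in $R$. For $n_0$, increasing the threshold by one deletes from the sum defining $\Phi$ exactly the single term with $m+1=n_0$, whose value at the old root is $\binom{N-1}{n_0-1}F(\gamma^*)^{n_0-1}(1-F(\gamma^*))^{N-n_0}\left(\frac{R}{n_0}-\gamma^*\right)$. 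By the bound $\gamma^*(R)<R/n_0$ from Theorem~\ref{them4}, this deleted term is strictly positive, so $\Phi(\gamma^*)$ drops below $0$ and the root moves down; hence $\gamma^*$ decreases in $n_0$.

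The dependence on $N$ is the genuine difficulty, and I would handle it by a coupling. Writing $m_{N}=m_{N-1}+X$ with $X\sim\mathrm{Bernoulli}(F(\gamma))$ independent of $m_{N-1}$, and setting $h(k)=\left(\frac{R}{k}-\gamma\right)\boldsymbol{1}_{\{k\ge n_0\}}$, one obtains the clean identity $\Phi_{N+1}(\gamma)-\Phi_{N}(\gamma)=F(\gamma)\,\mathbb{E}_{m_{N-1}}\!\left(h(m+2)-h(m+1)\right)$. Evaluated at the old root, where $\mathbb{E}(h(m+1))=\Phi_N(\gamma^*)=0$, this reduces the whole question to showing $\mathbb{E}_{m}\!\left(h(m+2)\right)\le 0$ for $m\sim B(N-1,F(\gamma^*))$. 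The obstacle is that $h$ is not monotone: it jumps up from $0$ to the positive value $R/n_0-\gamma^*$ at $k=n_0$ and only then decreases (eventually through zero), so first-order stochastic dominance does not sign the shift directly. I would resolve this using the equilibrium identity itself, rewriting $\Phi_N(\gamma^*)=0$ as $\gamma^*=R\,\mathbb{E}\!\left(\frac{1}{m+1}\,\middle|\,m+1\ge n_0\right)$, which both re-confirms $\gamma^*<R/n_0$ and lets me express $R/n_0-\gamma^*$ as a probability-weighted average of the nonnegative quantities $\frac{1}{n_0}-\frac{1}{k}$ over $k\ge n_0$; substituting this into $\mathbb{E}(h(m+2)-h(m+1))$ turns the target into a single inequality comparing the lone positive increment (at $m+1=n_0-1$) against the negative tail $-\frac{R}{(m+1)(m+2)}$, $m+1\ge n_0$, which can then be verified. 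This shows $\Phi$ drops at the old root, so the root moves down and $\gamma^*$ decreases in $N$.

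The main obstacle is exactly this last sign determination: because the per-user payoff kernel $h$ is non-monotone across the success threshold $n_0$, the effect of adding a user cannot be read off from stochastic monotonicity alone, and one must lean on the equilibrium indifference condition $\Phi_N(\gamma^*)=0$ together with the bound $\gamma^*<R/n_0$ to dominate the single ``wrong-signed'' term by the negative tail. The $R$ and $n_0$ cases, by contrast, follow directly from the shift-of-$\Phi$ principle once the sign of a single coefficient or deleted term is identified.
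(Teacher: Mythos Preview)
Your treatment of $R$ and $n_0$ matches the paper's: both arguments sign the shift of $\Phi$ at the old root and invoke the single-crossing property of $\Phi$ (one small caveat---Appendix~C only proves $\partial\Phi/\partial\gamma<0$ \emph{at a root}, not global monotonicity; but uniqueness plus the boundary signs $\Phi(R/N^+)>0$, $\Phi(R/n_0^-)<0$ give you the comparative-statics conclusion anyway).

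For $N$, your coupling identity $\Phi_{N+1}(\gamma^*)=F(\gamma^*)\,\mathbb{E}_{m\sim B(N-1,F(\gamma^*))}\!\left[h(m+2)\right]$ is correct and elegant, but the last step is a genuine gap. After your substitution the target becomes
\[
p_{n_0-2}\cdot\frac{1}{Q}\sum_{k\ge n_0}q_k\,\frac{k-n_0}{n_0\,k}\;\le\;\sum_{k\ge n_0}\frac{q_k}{k(k+1)},
\]
with $q_k=P(m+1=k)$ and $Q=\sum_{k\ge n_0}q_k$. This does \emph{not} hold term-by-term, and you have not indicated how to verify it; ``which can then be verified'' is doing all the work. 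The non-monotonicity of $h$ that you correctly diagnose survives into this inequality, and I do not see a short direct argument along your line.

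The paper sidesteps this by a different decomposition for $N$. Instead of the additive coupling (Pascal in the form $\binom{N}{m}=\binom{N-1}{m}+\binom{N-1}{m-1}$), it uses the multiplicative form $\binom{N}{m}=\frac{N}{N-m}\binom{N-1}{m}$ to write, for $n_0-1\le m\le N-1$,
\[
\tilde\phi(m)=(1-F^*)\,\frac{N}{N-m}\,\phi(m),
\]
so that $\Phi_{N+1}(\gamma^*)$ equals a \emph{reweighted} version of $\sum_m\phi(m)=0$ (plus the single new negative term at $m=N$). Because the weights $N/(N-m)$ are increasing in $m$ while $\phi(m)$ changes sign exactly once from positive to negative (since $R/(m+1)-\gamma^*$ does), a one-line Chebyshev/rearrangement argument gives $\sum_m\tilde\phi(m)\le 0$, and the extra term is negative. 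That closes the $N$ case cleanly without the residual inequality your route leaves open.
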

The proof of Theorem~\ref{thm:gamma_R_n0_N} is given in Appendix~{D}.

\begin{figure}[t]
\centering
\includegraphics[width=0.41\textwidth]{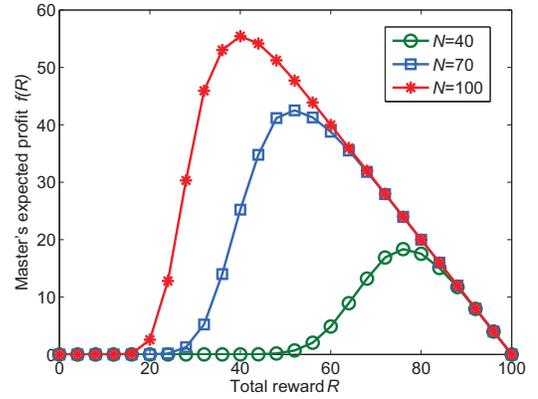}
\caption{\small Master's expected profit $f(R)$ as a function of $R$ and
$N$. Other parameters are $n_0=30$ and $V=100$. Also, we consider a
uniform cost distribution with $F(\gamma)=\min(\gamma/3,1)$. {The optimal reward values for different $N$ values (in ascending order) are 77, 52, and 40, respectively.}}
\label{fig:pi_R_N}
\end{figure}

As $N$ or $n_0$ increases, more users need to
{participate in} the collaboration and {thus the shared
reward per collaborator decreases. Therefore, the decision
threshold decreases and each user is less likely to collaborate.}



\subsubsection{Analysis of Stage I}
We are now ready to consider Stage I. {Given users' equilibrium
strategies based on threshold $\gamma^\ast(R)$ in Stage II in Theorem~\ref{them4}}, the master {chooses{ reward $R$} to maximize his expected profit, i.e., }{
\begin{equation}\label{eq:asymm_f}
\max_Rf(R) = \mathbb{E}_n\left((V - R)\boldsymbol{1}_{\{n \geq
n_0\}}\right),
\end{equation}
where the expectation is taken over $n$ which follows {a
binomial distribution $B(N, F(\gamma^*(R)))$}.}
{A smaller reward $R$ leads to
a larger value of $V-R$, but decreases the collaboration success
probability $P(n\geq n_0;R)$.}

{Let us denote the master's equilibrium choice of reward in Stage I as $R^{\ast}${, which is the optimal solution to Problem~(\ref{eq:asymm_f}).}} {To solve Problem~(\ref{eq:asymm_f}), we can use any one-dimensional exhaustive search algorithm to find the global optimal solution.} {Next we verify that the computation complexity is not high. We can approximate the continuity of the feasible range $[0, V]$ of reward $R$ through a proper discretization, i.e., representing all possibilities of $R$ by $\bar{V}$ equally spaced values (with the first and last values equal to 0 and $V$, respectively). Since the user decision threshold $\gamma ^*(R)$ for each possible value of reward $R$ belongs to the range $[0, V]$ (due to $\gamma ^*(R)\leq R$ and $R\leq V$), we can similarly discretize this continuous range of $\gamma ^*(R)$ by $\bar{V}$ possible values. To derive $\gamma^*(R)$ for each possible $R$ value, we need to search over all $\bar{V}$ possibiliites of $\gamma^*(R)$ to apprximately solve $\Phi=0$ in (\ref{eq:StageII_asymm}), and to derive the optimal $R^*$ we need to further search over all $\bar{V}$ possilities of $R$, and thus the overall computation complexity to solve Problem (\ref{eq:asymm_f}) is $\mathcal{O}(\bar{V}^2)$.} {The choice of $\bar{V}$ depends on the master's tolerance level of the quantization error, and a larger $\bar{V}$ value leads to a more accurate solution with more computation overhead.}

\begin{theorem}\label{thm:data_asymm_f}
The {equilibrium} expected
profit $f(R^{\ast})$ of the master increases in $V$ and $N$, and
decreases in $n_0$.

\end{theorem}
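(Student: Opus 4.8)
The plan is to treat all three comparative statics through one value-function monotonicity principle, and then to isolate $N$ because it is the only parameter creating opposing forces. Write the master's objective as $f(R;V,N,n_0)=(V-R)\,P(n\ge n_0)$, where $n\sim B(N,F(\gamma^*(R)))$ and $\gamma^*(R)$ is the Stage-II threshold of Theorem~\ref{them4}. The guiding principle is elementary: if for every fixed admissible $R\in[0,V]$ the map $\theta\mapsto f(R;\theta)$ is monotone in a parameter $\theta$, then the optimized value $f(R^*;\theta)=\max_{R}f(R;\theta)$ inherits that monotonicity. Indeed, if $f(R;\theta_1)\le f(R;\theta_2)$ for all $R$, then picking $R_1^*$ optimal at $\theta_1$ gives $f(R_1^*;\theta_1)\le f(R_1^*;\theta_2)\le f(R_2^*;\theta_2)$. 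So everything reduces to signing a pointwise-in-$R$ comparison.

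Monotonicity in $V$ is immediate. The Stage-II fixed-point equation $\Phi(\gamma)=0$ in (\ref{eq:StageII_asymm}) does not contain $V$, so $\gamma^*(R)$, and hence $P(n\ge n_0)$, is independent of $V$. Thus $\partial f/\partial V=P(n\ge n_0)\ge 0$ for each fixed $R$, and the principle above yields that $f(R^*)$ is nondecreasing in $V$; it is strictly increasing on the range where the master participates, since there $R^*>0$ induces $P(n\ge n_0)>0$. Monotonicity in $n_0$ follows from two reinforcing effects. Fix $R\in[0,V]$. First, for a fixed per-user probability $q$, the binomial tail $P(B(N,q)\ge n_0)$ is nonincreasing in the threshold $n_0$. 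Second, by Theorem~\ref{thm:gamma_R_n0_N} the equilibrium threshold $\gamma^*(R)$ decreases in $n_0$, so $q=F(\gamma^*(R))$ decreases (as $F$ is a CDF), which by stochastic dominance lowers $P(B(N,q)\ge n_0)$ further. Combining, $P(n\ge n_0)$ is nonincreasing in $n_0$ at every $R$, hence so is $f(R;n_0)=(V-R)P(n\ge n_0)$ on $[0,V]$, and the principle gives that $f(R^*)$ is nonincreasing in $n_0$.

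The parameter $N$ is the main obstacle, because the two effects now conflict: increasing $N$ adds a Bernoulli trial (raising the collaborator count), but by Theorem~\ref{thm:gamma_R_n0_N} it also lowers $\gamma^*(R)$ and hence the per-user probability $q=F(\gamma^*(R))$, so the pointwise-in-$R$ comparison is no longer obviously signed. My plan is to prove a dedicated lemma: for each fixed $R$, the success probability $P(B(N,F(\gamma^*_N(R)))\ge n_0)$ is nondecreasing in $N$; granting it, $f(R;N)$ is nondecreasing in $N$ for each fixed admissible $R$, and the envelope principle closes the theorem. To establish the lemma I would couple the $N$-user and $(N{+}1)$-user experiments and show the extra trial dominates the drop in $q$. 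The quantitative control needed is the squeeze $R/N<\gamma^*(R)<R/n_0$ from Theorem~\ref{them4}, which bounds how far $q$ can fall when a user is added (in particular it keeps the expected headcount $N\,F(\gamma^*_N(R))$ from collapsing, since $\gamma^*>R/N$ forces $N F(\gamma^*)>N F(R/N)$), together with the identity $\mathbb{E}[1/(m+1)]=(1-(1-q)^N)/(Nq)$ that links the threshold equation to the tail probability.

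I expect the delicate step to be ruling out a regime where a large downward jump in $q$, permitted in principle by the implicit equation $\Phi(\gamma)=0$, outweighs the single added trial. Here I would lean on the monotone and bounded structure of $\gamma^*_N(R)$ from Theorem~\ref{thm:gamma_R_n0_N} to control the jump; and if a fully general coupling proves elusive, I would reparametrize the master's problem by the induced threshold $\gamma$ (so that each $\gamma$ corresponds to a unique reward $R_N(\gamma)$ via $\Phi=0$) and compare the required rewards $R_N(\gamma)$ and $R_{N+1}(\gamma)$ against the improved binomial tail directly, trading the pointwise comparison for a feasibility argument that the $(N{+}1)$-user master can reproduce, at no greater reward, a success probability at least as large as the $N$-user optimum.
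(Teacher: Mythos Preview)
Your envelope-theorem structure matches the paper's: establish pointwise monotonicity of $f(R;\theta)$ in the parameter $\theta$ at every fixed $R$ and lift it to the optimized value. For $V$ the arguments coincide verbatim. For $n_0$ the paper takes a slightly different tack than your stochastic-dominance argument: it computes $\partial f/\partial F^*$ directly and signs it via the binomial mean identity. Writing $F^*=F(\gamma^*)$, the full sum $\sum_{n=0}^N(n-NF^*)\binom{N}{n}(F^*)^{n-1}(1-F^*)^{N-n-1}$ vanishes because $\mathbb{E}[n]=NF^*$; truncating below at $n=n_0$ drops only terms with small $n$, which are precisely the negative ones, so the truncated sum (equal to $\partial f/\partial F^*$ up to the positive factor $V-R$) is positive. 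Combined with $\partial F^*/\partial n_0<0$ from Theorem~\ref{thm:gamma_R_n0_N} and the loss of a positive summand when the lower limit rises, this yields $f(R)$ decreasing in $n_0$. Your route via binomial-tail monotonicity and stochastic dominance is equivalent and arguably cleaner.

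You are right to isolate $N$ as the real obstruction. In fact the paper's Appendix~E proves only the $V$ and $n_0$ claims analytically; the monotonicity in $N$ is supported by the numerical Figure~\ref{fig:pi_R_N} and the informal paragraph following the theorem, not by a proof. So your proposal is already more explicit than the paper on this point. That said, your fallback reparametrization does not obviously close the gap: the argument in Appendix~D shows that $\Phi$ decreases in $N$ at fixed $(\gamma,R)$, and since $\Phi$ increases in $R$, restoring $\Phi=0$ at the same threshold $\gamma$ requires a \emph{larger} reward, i.e.\ $R_{N+1}(\gamma)>R_N(\gamma)$. Hence the feasibility claim ``at no greater reward'' fails at fixed $\gamma$; you are thrown back onto showing that the gain in the binomial tail $P\bigl(B(N+1,F(\gamma))\ge n_0\bigr)-P\bigl(B(N,F(\gamma))\ge n_0\bigr)$ outweighs the extra reward, which is the same tradeoff you set out to avoid. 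The squeeze $R/N<\gamma^*<R/n_0$ from Theorem~\ref{them4} controls the level of $\gamma^*$ but not its increment in $N$, so it does not by itself bound the drop in $F(\gamma^*)$. A complete analytic argument for the $N$ dependence thus remains open both in your plan and in the paper as written.
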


The proof of Theorem~\ref{thm:data_asymm_f} is given in Appendix~{E}. {Similarly, we can show that the optimal reward $R^*$ increases in $V$ and $n_0$, and decreases in $N$.}

\begin{figure}[t]
\centering
\includegraphics[width=0.41\textwidth]{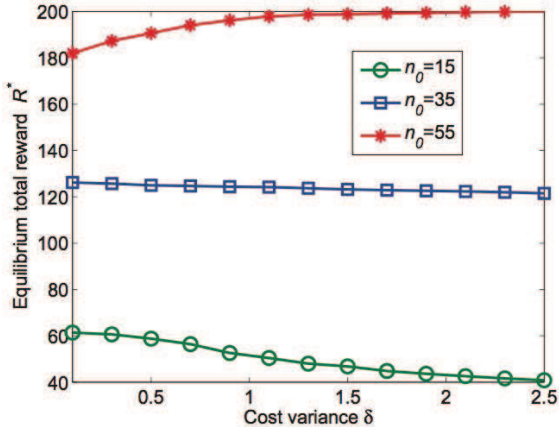}
\caption{\small Master's equilibrium reward $R^*$ as a function of variance $\delta$ and $n_0$. Other parameters are $N=80$, $V=210$, and $\mu=3$. }
\label{fig:Reward_Variance}
\end{figure}

As the master's revenue $V$ increases, he benefits more from the
collaboration.  As the {threshold}  $n_0$ increases, {however,}
each user is less likely to collaborate. Thus the master has to give a
larger total reward to attract enough collaborators. This decreases
his equilibrium expected profit.

Figure~\ref{fig:pi_R_N} shows that the master's expected profit
$f(R)$ as a function of $R$ and $N$. We can see that both $f(R)$ and
the equilibrium $f(R^{\ast})=\max_{R}f(R)$ are increasing in $N$.
Intuitively, as $N$ increases, more users have small collaboration
costs {(as the cdf function $F(\cdot)$ does not change)}, and more
users will collaborate under the same total reward. Thus
the master can lower the equilibrium reward $R^*$ and obtain a
larger expected profit.

Next we study how the master's equilibrium total reward and expected profit change with the cdf function $F(\cdot)$ of a user's collaboration cost. We pick Gaussian distribution for example, which can be explicitly characterized by mean $\mu$ and variance $\delta$ only. This is the case where the master aggregates his cost observations over a large number of user samples.\footnote{Note that the following results also apply to uniform distribution, and we skip the discussion here due to the page limit.}

\begin{observation}
The master's equilibrium total reward $R^*$ increases in mean $\mu$ (and his expected profit decreases in $\mu$). Moreover, the optimal reward $R^*$ increases in variance $\delta$ for large $n_0$ and decreases in $\delta$ for small $n_0$.
\end{observation}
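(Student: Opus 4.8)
The plan is to \emph{reparametrize} the master's Stage~I problem so that the entire dependence on the Gaussian parameters $(\mu,\delta)$ is isolated in a single affine term. First I would change the decision variable from the reward $R$ to the induced per-user collaboration probability $q:=F(\gamma^*(R))$; Theorem~\ref{thm:gamma_R_n0_N} guarantees that $\gamma^*(R)$, and hence $q$, is strictly increasing in $R$, so this is a valid bijection. Because $F$ is the strictly increasing Gaussian cdf, $\gamma^*=F^{-1}(q)=\mu+\sqrt{\delta}\,z_q$ with $z_q:=\Psi^{-1}(q)$ and $\Psi$ the standard normal cdf. Substituting $q$ for $R$ fixes the law of the competitor count $m\sim B(N-1,q)$ in the equilibrium condition $\Phi(\gamma^*)=0$ of Theorem~\ref{them4} (equation (\ref{eq:StageII_asymm})); since that condition is then affine in $R$ and $\gamma^*$, I can solve it explicitly:
\begin{equation*}
R=\kappa(q)\,\gamma^*=\kappa(q)\bigl(\mu+\sqrt{\delta}\,z_q\bigr),\qquad \kappa(q):=\frac{P(m+1\ge n_0)}{\mathbb{E}_m\!\left[\boldsymbol{1}_{\{m+1\ge n_0\}}/(m+1)\right]}.
\end{equation*}
Here $\kappa(q)$ depends only on $(q,N,n_0)$ and satisfies $n_0<\kappa(q)<N$ (matching the bounds of Theorem~\ref{them4}). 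The objective (\ref{eq:asymm_f}) then reads $g(q)=(V-R(q;\mu,\delta))\,S(q)$ with $S(q):=P(\mathrm{Binom}(N,q)\ge n_0)$ again depending only on $(q,N,n_0)$, so all $(\mu,\delta)$-dependence lives in $R(q;\mu,\delta)=\kappa(q)\mu+\kappa(q)\sqrt{\delta}\,z_q$.

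With this separation the \emph{mean} statements follow cleanly. At fixed $q$, $\partial R/\partial\mu=\kappa(q)>0$, hence $g_\mu=-\kappa(q)S(q)<0$, and the envelope theorem gives that the optimal profit $\max_q g(q;\mu)$ is strictly decreasing in $\mu$. For the reward, the direct effect of $\mu$ is the positive constant $\kappa(q^*)$; I would argue it dominates by establishing, via Topkis-type monotone comparative statics on $g(q;\mu)$ (computing $g_{q\mu}=-\kappa(q)S'(q)-\kappa'(q)S(q)$ and using $S'>0$ together with $\kappa'>0$, which I would verify from the explicit form of $\kappa$), that the optimizer $q^*$ drifts only mildly, so that $dR^*/d\mu=\kappa(q^*)+R_q(q^*)\,dq^*/d\mu>0$.

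For the \emph{variance} statements I would differentiate the isolated term at fixed $q$: $\partial R/\partial\delta=\kappa(q)\,z_q/(2\sqrt{\delta})$, whose sign equals that of $z_q=\Psi^{-1}(q)$, i.e.\ positive iff $q>\tfrac12$ and negative iff $q<\tfrac12$. The final step is to tie the operating point $q^*$ to $n_0$: because the master needs the success probability $S(q^*)$ bounded away from zero, and the median of $\mathrm{Binom}(N,q)$ sits near $Nq$, the optimal $q^*$ lies in the vicinity of $n_0/N$. Thus for \emph{large} $n_0$ (roughly $n_0>N/2$) we get $q^*>\tfrac12$, $z_{q^*}>0$, and $R^*$ increasing in $\delta$; for \emph{small} $n_0$ we get $q^*<\tfrac12$, $z_{q^*}<0$, and $R^*$ decreasing in $\delta$, in agreement with Fig.~\ref{fig:Reward_Variance}.

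The main obstacle, in both the mean and variance arguments, is the \emph{indirect} effect through the optimizer $q^*(\mu,\delta)$: the fixed-$q$ (direct) effect has a clean sign, but the total derivative also contains a competing term $R_q(q^*)\,dq^*/d(\cdot)$, since at an interior optimum $R_q(q^*)=(V-R^*)S'(q^*)/S(q^*)>0$ while $dq^*/d(\cdot)$ has the opposite sign. The crux is therefore to show the direct effect dominates — equivalently, to bound how fast $q^*$ (and hence $z_{q^*}$ and $\kappa(q^*)$) moves with the parameters — which relies on the curvature bound $g_{qq}<0$ at the maximum and the monotonicities $\kappa',S'>0$. This is most delicate near the crossover $n_0\approx N/2$, which is exactly why the result is stated as an observation and corroborated numerically in Fig.~\ref{fig:Reward_Variance}.
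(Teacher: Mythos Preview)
The paper does not offer a formal proof of this statement. It is labeled an \emph{Observation} precisely because the support is heuristic and numerical: the $\mu$-dependence is dismissed in one sentence as ``quite intuitive'' (larger expected costs require larger compensation), and the $\delta$-dependence is argued verbally---large $n_0$ forces the master to cover users in the upper cost tail (which widens with $\delta$), small $n_0$ lets him target the lower tail (which drops with $\delta$)---and then illustrated by Fig.~\ref{fig:Reward_Variance}. There is no envelope argument, no comparative statics, and no attempt to control how the optimizer moves.

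Your route is genuinely different and considerably more analytical. The reparametrization $R\mapsto q=F(\gamma^*(R))$ is the right idea: it freezes the law of $m$ and lets you solve $\Phi(\gamma^*)=0$ explicitly as $R=\kappa(q)\gamma^*=\kappa(q)(\mu+\sqrt{\delta}\,z_q)$, so that all $(\mu,\delta)$-dependence sits in an affine term. This makes the profit-in-$\mu$ claim a clean envelope-theorem statement, which the paper does not attempt, and it turns the paper's verbal ``which tail matters'' story into the precise statement $\partial R/\partial\delta\propto z_{q^*}$, with the sign flipping at $q^*=\tfrac12$. Your monotonicity $\kappa'(q)>0$ is correct (since $\kappa(q)^{-1}=\mathbb{E}[1/(m+1)\mid m\ge n_0-1]$ and the conditional law of $m$ shifts right with $q$), and $g_{q\mu}=-\kappa'S-\kappa S'<0$ indeed yields $q^*$ nonincreasing in $\mu$ by Topkis.

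Where you and the paper end up in the same place is the honest gap you flag: the \emph{indirect} effect through $q^*$. You correctly compute that at an interior optimum $R_q(q^*)>0$ while $dq^*/d\mu\le 0$ (and similarly for $\delta$), so the total derivative $dR^*/d(\cdot)$ has two competing pieces; you propose a curvature-based dominance argument but do not carry it out, and the claim $q^*\approx n_0/N$ (hence $q^*\gtrless\tfrac12$ iff $n_0\gtrless N/2$) remains heuristic. The paper simply sidesteps this by relying on numerics. So your proposal is a real analytical advance over the paper's treatment, but to make it a proof rather than an observation you would still need to close exactly the ``direct effect dominates'' step you identified---most delicately near $n_0\approx N/2$.
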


The relationship between $R^*$ and $\mu$ is quite intuitive, as the master needs to decide a larger $R^*$ to compensate each collaborator's increased cost in expected sense. We next elaborate the impact of $\delta$ on $R^*$.

Figure~\ref{fig:Reward_Variance} shows $R^*$ as a function of variance $\delta$ and $n_0$, where the master with smaller $n_0$ requirement can efficiently build the database with smaller reward $R^*$. When the master requires a large $n_0$, he needs to incentivize most users to join the collaboration. As $\delta$ increases, some users are more likely to realize much larger costs than $\mu$ and the conservative master still needs to incentivize them. Thus $R^*$ increases in $\delta$ in this case. When the master only requires a small $n_0$, he can target at those users with smallest costs. As $\delta$ increases, these users are more likely to have much smaller costs than $\mu$ and the master only needs to decide smaller $R^*$ to incentivize them.

We can similarly show in Fig.~\ref{fig:Profit_Variance} that as $\delta$ increases, the master with large $n_0$ requirement obtains smaller expected profit $f(R^*)$, whereas the master with small $n_0$ requirement obtains larger $f(R^*)$. Notice that $f(R^*)$ decreases in requirement $n_0$, which is consistent with Theorem~\ref{thm:data_asymm_f}.

\begin{figure}[t]
\centering
\includegraphics[width=0.41\textwidth]{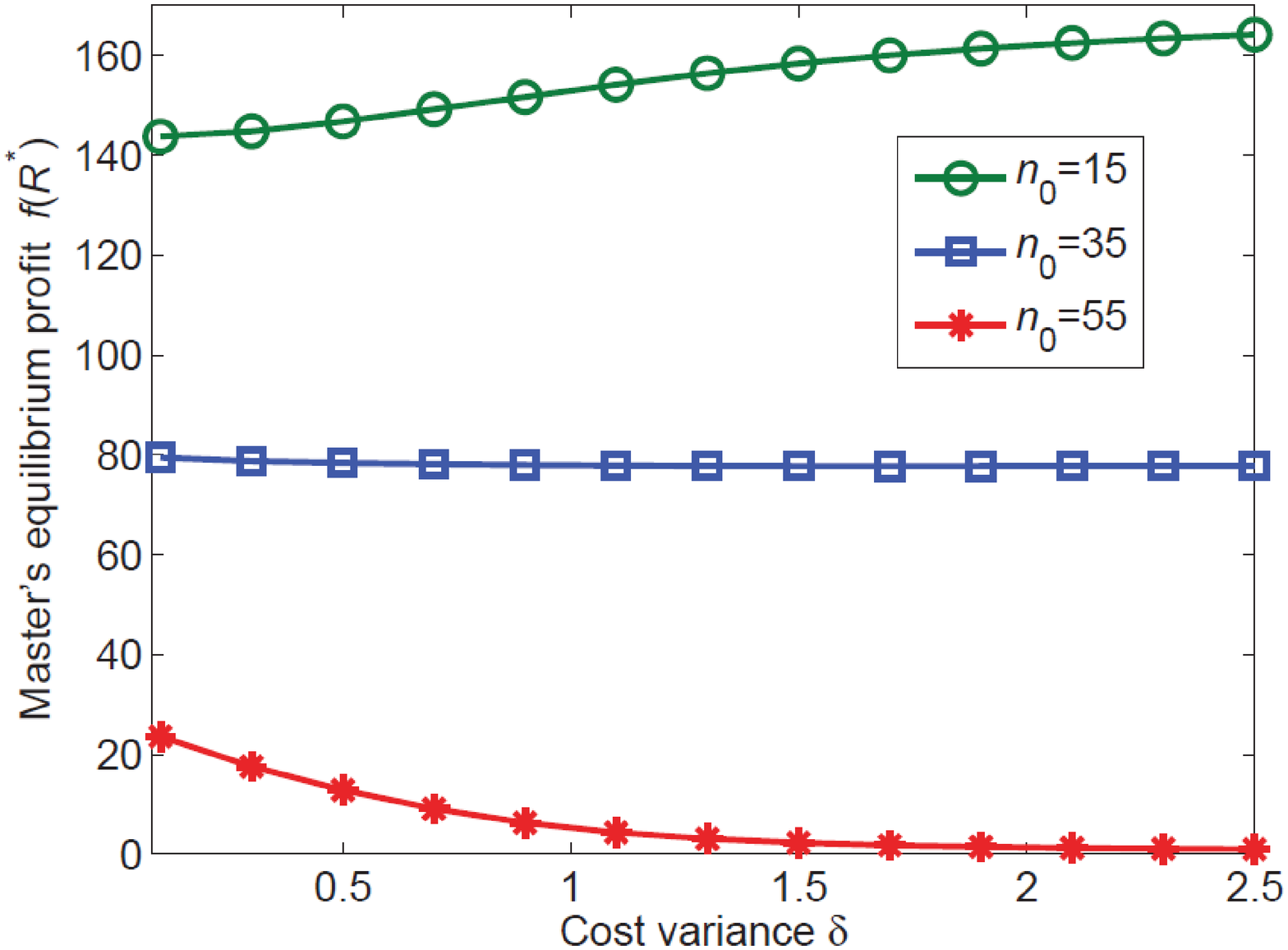}
\caption{\small Master's equilibrium expected profit $f(R^*)$ as a function of variance $\delta$ and $n_0$. Other parameters are $N=80$, $V=210$ and $\mu=3$. }
\label{fig:Profit_Variance}
\end{figure}

\begin{figure}[t]
\centering
\includegraphics[width=0.4\textwidth]{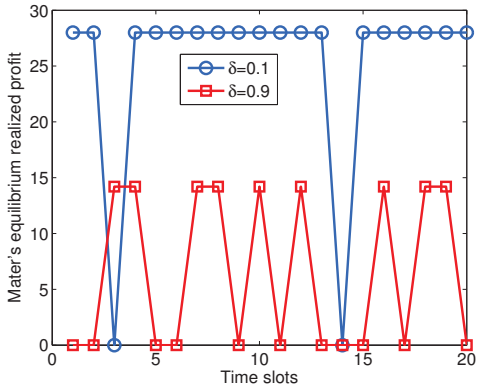}
\caption{\small Master's equilibrium realized profit in time horizon. Other parameters are $N=80$, $V=210$, $n_0=55$ and $\mu=3$. }
\label{fig:Profit_Variance_Realized}
\end{figure}


As the users' costs {are random variables and} have different realizations {in different time slots}, we {explore} how the master's equilibrium \textbf{realized} profit changes with time and cost variance $\delta$ in the newly added Fig.~\ref{fig:Profit_Variance_Realized} when {the contributor threshold} $n_0=55$. {Notice that} for each $\delta$ value, the optimal reward $R^*$ is determined to maximize \emph{expected} profit {based on users' cost distributions}, {and does not depend on the cost realizations in each time slot}. The realized profit is either $V-R^*$ or $0$, depending on {whether the collaboration is successful in that time slot} (i.e., $n\geq n_0$) according to (6). As $\delta$ increases, the master's realized profit in Fig.~6 decreases most of the time (except time slots 3 and 14). {This is consistent with the result in Fig.~\ref{fig:Profit_Variance} regarding the equilibrium expected profit when $n_{0}=55$.} Furthermore, Fig.~\ref{fig:Profit_Variance_Realized} shows {that} a larger variation $\delta$ leads to more fluctuations of the mater's realized profit. {This means that} when $n_0$ is large, the master prefers a smaller cost variation {in order to have a larger realized profit in most of the time slots}. {When $n_0$ is small (e.g., $n_0=15$), we can show that (through a figure of a similar style as Fig.~\ref{fig:Profit_Variance_Realized}) that} the master prefers a larger cost variation instead. {We do not include the small $n_{0}$ case here due to the space limit.}

By comparing the performances of the master and users under complete
information, {symmetrically incomplete information}, and
asymmetrically incomplete information, we have the following result.

\begin{theorem}\label{ob:data_symm_asymm}
At the equilibrium of the collaboration game, the master obtains the
smallest expected profit under \emph{asymmetrically} incomplete
information, whereas the users obtain the smallest (zero)
expected payoffs under {\emph{symmetrically} incomplete information}.
\end{theorem}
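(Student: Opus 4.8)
The plan is to split Theorem~\ref{ob:data_symm_asymm} into two independent comparisons --- the users' payoffs and the master's expected profit --- and to work throughout in the nontrivial regime where collaboration is active in all three structures (in particular $V\ge n_0\mu$, so the symmetric-information master initiates). I would freely reuse the equilibrium characterizations already proved in Theorem~\ref{thm:data_complete} (complete information), Theorem~\ref{thm:data_symm} (symmetric incomplete information), and Theorems~\ref{them4}--\ref{thm:data_asymm_f} (asymmetric incomplete information).

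The users' part is the easy one. In every scenario a user participates only when his (expected) payoff is nonnegative, so the aggregate user payoff is always $\ge 0$. By Theorem~\ref{thm:data_symm} every collaborator earns exactly zero under symmetric incomplete information, so that scenario attains this lower bound; it remains to check the bound is not matched elsewhere. Under complete information a user with $C_i<C_{n_0}$ earns $C_{n_0}-C_i>0$ (Theorem~\ref{thm:data_complete}), and under asymmetric incomplete information the defining relation $\Phi(\gamma^*)=0$ of Theorem~\ref{them4} pins only the threshold type $C_i=\gamma^*(R^*)$ at zero, leaving every inframarginal type $C_i<\gamma^*(R^*)$ with a strictly positive information rent. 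Hence the users' smallest payoff is the zero achieved under symmetric incomplete information.

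For the master I would prove ``asymmetric $\le$ complete'' and ``asymmetric $\le$ symmetric'' separately. The first I would argue realization by realization. Fix a cost profile and write $\gamma^*=\gamma^*(R^*)$. On any realization where the asymmetric mechanism succeeds, at least $n_0$ users have cost $\le\gamma^*$, so $C_{n_0}\le\gamma^*$; combined with $n_0\gamma^*<R^*$ (from $\gamma^*(R)<R/n_0$ in Theorem~\ref{them4}) this gives $n_0C_{n_0}<R^*\le V$, so the complete-information master is active and his profit $V-n_0C_{n_0}$ strictly exceeds the asymmetric profit $V-R^*$. On any failure realization the asymmetric profit is $0$, which the (nonnegative) complete-information profit never undercuts. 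Taking expectations over the cost profile yields the inequality.

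The hard part will be ``asymmetric $\le$ symmetric'', i.e. $f(R^*)\le V-n_0\mu$, which I would reduce to $f(R)\le V-n_0\mu$ for every $R$. The engine is the binomial identity $\mathbb{E}_m\bigl[\frac{1}{m+1}\boldsymbol{1}_{\{m+1\ge n_0\}}\bigr]=P(n\ge n_0)/(NF(\gamma))$ for $m\sim B(N-1,F(\gamma))$ and $n\sim B(N,F(\gamma))$; substituting it into the equilibrium relation $\Phi(\gamma^*(R))=0$ of (\ref{eq:StageII_asymm}) and writing $\gamma=\gamma^*(R)$, $p=F(\gamma)$ turns the profit into the closed form
\[
f(R)=V\,P(n\ge n_0)-N p\,\gamma\,P(m\ge n_0-1),
\]
so the target becomes $V\bigl(1-P(n\ge n_0)\bigr)+N p\,\gamma\,P(m\ge n_0-1)\ge n_0\mu$. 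Using $P(m\ge n_0-1)\ge P(n\ge n_0)$ and $V\ge n_0\mu$, this holds immediately whenever $NF(\gamma)\,\gamma\ge n_0\mu$. The genuine obstacle is the complementary regime $NF(\gamma)\,\gamma<n_0\mu$: there the reward term no longer covers $n_0\mu$ on its own, and one must instead show the collaboration-success probability $P(n\ge n_0)$ is small enough that the failure-cost term $V\bigl(1-P(n\ge n_0)\bigr)$ makes up the difference. I expect this to be the crux of the whole theorem, since it is exactly the regime in which a broad-support (highly heterogeneous, low-$n_0$) cost distribution would let the asymmetric master cheaply exploit self-selecting low-cost users --- so the inequality is really the statement that the information-rent effect dominates the self-selection effect. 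I would attack it by bounding the binomial upper tail $P(n\ge n_0)$ through $\mathbb{E}[n]=NF(\gamma)$ and a Chernoff/Chebyshev estimate, or by analyzing the maximizer of $f$ directly rather than all $R$, and I would expect to need some regularity on $F$ for the bound to close.
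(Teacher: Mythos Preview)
The paper gives no proof of Theorem~\ref{ob:data_symm_asymm}; the statement is followed only by a one-sentence interpretation (users benefit from knowing their costs, the master loses when users can hide them) and the remark that the complete-versus-symmetric ordering for the master depends on $F$. There is nothing on the paper's side to compare your proposal against.

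Your users' argument and your realization-by-realization proof that the asymmetric master does no better than the complete-information master are correct and go well beyond the paper's level of detail for this theorem.

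On the remaining master comparison (asymmetric versus symmetric) your caution is right, and the obstacle you flag is not an artifact of your method: the inequality $f(R^*)\le V-n_0\mu$ can fail for general $F$. Take a distribution with most of its mass near zero and a thin right tail pinning the mean at $\mu$ (a smoothed two-point law suffices if you want $F$ strictly increasing so that Theorem~\ref{them4} applies cleanly). Under asymmetric information the master posts a small reward, the abundant low-cost users self-select, $n\ge n_0$ holds with probability close to one, and the expected profit is close to $V$---strictly exceeding the symmetric value $V-n_0\mu$. This is precisely the ``self-selection dominates information rent'' regime you named in your proposal. So the regularity condition on $F$ that you anticipated is genuinely needed; without it the master-side half of the theorem as written does not hold, and no amount of tail-bounding on the binomial will rescue it.
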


{Theorem \ref{ob:data_symm_asymm} shows that the users benefit
from knowing their own costs, while the master incurs profit loss
when the users know their costs and can hide the information from the master. }

Recall that the master obtains an expected profit $V-n_0\mu$ under
{symmetrically incomplete information}, and obtains a profit
$V-n_0C_0$ under complete information. The relation between these
two values depends on $N$, $n_0$, and $F(\cdot)$. Take the uniform
distribution $F(\cdot)$ as an example. If $n_0$ is much smaller than
$N/2$, the expected value of $C_0$ will be smaller than $\mu$ and
the master is better off under complete information.

\section{{Collaborations on Distributed {Computing}}}\label{sec:compute}
\subsection{System Model on Distributed
Computation}\label{subsec:network_comput}

In this type of  applications, the master solicits the collaboration
of smartphone users to perform distributed {computing}.\footnote{{PCs are also suitable to handle distributed computing given a lot more power. Our results here can also be applicable for those PCs which are underutilized and can connect to Internet for networking.}}
{Different from requiring fixed and periodic data reporting} as
in {data acquisition applications}, the master here can assign different
amounts of work to different user types. Smartphones are generally
different in terms of CPU performance, {memory and storage},
battery life, and
{connectivity}\cite{phan2002challenge}. {Even with the same smartphones, two users may have different phone usage behaviors and different sensitivities (e.g., energy consumption).}


{One can imagine that the energy consumption will hinder the smartphones' involvement in distributed computing, {which motivates us to} consider smartphones' energy constraints (i.e., battery capacity limits) in the modeling of $\bar{t}_i$ in (8). {Meanwhile,} we want to highlight that {it is already feasible for} smartphones {to support} distributed computing. First, the large {number} of smartphones can help compensate {the energy limitation of each individual phone} \cite{rodriguez2011introducing}. Then the energy consumed by an individual smartphone is not large. Second, the energy limitation will {be of a less concern for smartphones which have access to charging facilities (as the reviewer has pointed out).}
With {the newly developed} wireless charging {technologies} (e.g., inductive and magnetic resonance couplings), more smartphones can be supported even when they are moving around \cite{graham2009wireless}. Third, the battery technologies and {energy management algorithms have been significantly improved during the recently years} (e.g., \cite{kelenyi2008energy,palmer2009ibis}). {Finally, today's} data storage technologies make it possible to store tens of gigabytes {of data} in a small {memory} card, which {means that some of today's smartphones are almost as capable as desktop computers from several years back} \cite{rodriguez2012energy}.}

We consider a total of $N$ users belonging to {a set
$\mathcal{I}=\{1,\cdots,I\}$ of} $I$ types. {Each type
$i\in\mathcal{I}$ has $N_{i}\geq 1$ users, with
$\sum_{i\in\mathcal{I}} N_{i}=N$.} A type-$i$ user {can perform
at most $\bar{t}_i$ units of work}, and faces a cost $K_i$ per unit
of work he performs. The upper bound
{of $\bar{t}_i$} reflects the limited battery capacity, time
constraint, or other physical constraints. {\label{footnote:techspec}Users know
their unit costs before the collaboration, {since (i) many factors of these costs (e.g.,
power consumption) are explicitly reflected by smartphones'
technical specifications, and (ii) users explicitly know their own sensitivities (e.g., to power consumption) in costs.} {Note that the data exchanged between users and the master are not sensitive to users.} This is different from data acquisition,
where costs {also} come from implicit insecurity.} {To determine the unit cost values, one can check \cite{qian2011profiling} and \cite{pathak2012energy} for energy consumption data (e.g., radio power in joul per minute) in specific applications.}

The payoff of a
type-$i$ user who accomplishes $t$ units of work and receives a
reward $r$ from the master is
\begin{equation}\label{eq:u_i}
u_i(r, t)= r-K_it, \mbox{ for } 0 \leq t \leq \bar{t}_i.
\end{equation}
Note that the user can always choose not to
collaborate with the master and thus receive zero payoff with
$t=r=0$.\footnote{{Note that rewards and tasks are related in a properly designed contract, and one would expect that the reward increases with the corresponding task amount. We will analytically characterize such a relationship later on.}} Without loss of generality, we {order user types in the descending order of the \emph{unit cost}, i.e.,}
$K_1>K_2>...>K_I$, {i.e., a higher type of user has a smaller cost.} {Note that if any two types have the same unit cost, we can group them together as a single type.} {It should also be mentioned that this {unit-cost ordering} is different from the way that we order smartphone users' \emph{constant} collaboration costs in Section~\ref{sec:Data}.}

By asking each type-$i$ user to accomplish the amount $t_i$ of work
and rewarding him with $r_i$, the master's profit is\footnote{{Here the summation operation is taken outside the logarithmic terms, as it is hard to coordinate different user types (in operation systems, computing speeds, and storage capabilities) in the same computing task due to compatibility and synchronization issues and they will be responsible for different tasks at the same time.}}
\begin{equation} \label{eq:pi_ri_ti}
\pi(\{(r_i,t_i)\}_{i\in\mathcal{I}})=\sum_{i
\in\mathcal{I}}\left(\theta_i \log(1+N_i t_i)-N_i r_i\right).
\end{equation}
The term $\theta_i \log(1+N_i t_i)$ {is increasing in users' efforts and} well characterizes the master's
diminishing return (or utility) from the {total work  $N_i t_i$} finished by type-$i$ (as in
\cite{li2009revenue,basar2002revenue}).\footnote{{The assumed logarithmic utility term helps us derive closed-form solutions and engineering insights. Using other concave terms with diminished return are not likely to change the main conclusions.}} The parameter $\theta_i >
0$ characterizes the master's preference for work performed by
{type-$i$ users,} and {does not depend on} $K_i$.\footnote{{The value of $\theta_i$ depends on how easily the master can combine the computing efforts of this type with other types' efforts to finalize the computing result. Notice that $\theta_i$ is large if the master's communication with type-$i$ users is efficient (e.g., with small delay) or the type-$i$'s operation system becomes highly compatible with the master's system to handle computing tasks.}} {Notice that we do not require $\theta_i$'s to be monotonically ordered.}
The term $N_{i}r_{i}$ in
(\ref{eq:pi_ri_ti}) is the total reward that the master {offers
to type-$i$ users.} {The summation operation in (\ref{eq:pi_ri_ti}) is motivated by the fact that many complex engineering or commercial problems can be separated into multiple subproblems and solved in a distributed manner (e.g., \cite{litke2004mobile,palmer2009ibis}).}

By examining (\ref{eq:u_i}) and (\ref{eq:pi_ri_ti}), we can see that
the master and {users} have conflicting objectives. The master wants users to {accomplish a larger
task}, which increases  the master's utility {as well as} users'
collaboration costs. Users want to obtain a larger reward, which
decreases the master's profit. {Next we study how
master and users interact through a contract.}

\subsection{{Master-Users} Contractual Interactions\label{subsec:contract}}

Contract theory studies how an economic decision-maker constructs
contractual arrangements, especially in the presence of asymmetric
(private) information \cite{bolton2005contract}. In our case, the
user types {are}  private information.

The master proposes a contract that specifies the relationship
between a {user's} amount of  {task} $t$ and
reward $r$. Specifically, a {\em contract} is a set $\mathcal{C} =
\{(t_1, r_1), \ldots, (t_M, r_M)\}$ of $M \geq 1$ (amount of
{task}, reward)-pairs that are called {\em contract
items}. The master proposes $\mathcal{C}$. Each user selects a
contract item $(t_m, r_m)$ and performs the amount of work $t_m$ for
the reward $r_m$. According to \cite{bolton2005contract}, it is
optimal for the master to design a contract item for each type,
i.e., $M=I$. {Note that a user can always choose not to work
{for} the master, which implies an implicit contract
item $(r,t)=(0,0)$ (often not counted in the total number of
contract items).} {Once a user accepts some contract item, he
needs to accomplish the task and the master needs to reward him
according to that item.}

{Each type of users} selects the contract item that maximizes his
payoff in (\ref{eq:u_i}). The master wants to optimize the contract
items and maximize his profit in (\ref{eq:pi_ri_ti}).
We {will again focus on a two-stage Stackelberg game,
where the master proposes the contract first and users choose the
contract {items} afterwards.}

Next, we study {how the master} determines the contract that maximizes his profit, depending on what information he has about the
users' types. {As explained in the beginning of Section~\ref{subsec:network_comput}, we assume that a user knows his unit cost. This means that we only need to consider two information scenarios,  complete information and asymmetrically
incomplete information, depending on what the master knows.}

\subsection{Contract Design under Complete Information}\label{subsec:contract_symm}

In this subsection, we study the case where the master knows
{the type of each user {though this case is not easy to realize in practice}}. {The analysis of this subsection mainly serves as a benchmark for understanding the more realistic incomplete information scenario in next subsection.} {Under complete information, it is feasible for the
master to} monitor and make sure that each type of users accepts
only the contract item designed for that type. The master needs to
ensure that each user has a non-negative payoff so that {the user will }accept the contract. In other words, the contract should
satisfy the following {individual rationality}
constraint{s}.

\begin{definition}[IR: Individual Rationality]\label{def:IR}
 A contract satisfies the individual
rationality constraint{s} if each type{-$i$} user receives
a non-negative payoff by accepting the contract item for
type{-$i$}, i.e.,\footnote{{We can easily extend our model by considering a reservation payoff $u_0>0$ for all users, which represents their benefit by making an alternative choice besides joining in the collaboration. By following a similar analysis, we can still show that the new IR constraints $r_i-K_it_i\geq u_0$ are tight at the contract optimality for any type-$i$ user who joins in the collaboration. The key difference is that the master now needs to match users' reservation payoffs by announcing larger rewards, which is slightly different from Theorem 8.}}
\begin{equation}\label{eq:IR}
r_i-K_it_i\geq 0,\ \forall i\in\mathcal{I}.
\end{equation}
\end{definition}

Under complete information, the optimal contract
{$\mathcal{C}=\{(r_i^*,t_i^*)\}_{i\in\mathcal{I}}$} solves the
following problem:
\begin{align}\label{eq:opt_symm}
&\max_{\{(r_i,t_i)\}_{i\in\mathcal{I}}}\pi({\{(r_i,t_i)\}_{i\in\mathcal{I}}})=\sum_{i\in\mathcal{I}}{(\theta_i\log(1+N_it_i)-N_ir_i)},\nonumber
\\ &\text{subject\ to:\  \text{IR\ constraints}\ (\ref{eq:IR})} \text{ and }  0\leq t_i\leq \bar{t}_i,\forall
i\in\mathcal{I}.
\end{align}

It is easy to check that the IR constraints {are tight} at the
{optimal solution} to Problem (\ref{eq:opt_symm}), and the
master {will} leave a zero payoff to each {type-$i$ user
}with $r_i^*=K_it_i^*$. Also, due to the independence of each type
in Problem (\ref{eq:opt_symm}), we can decompose Problem
(\ref{eq:opt_symm}) into $I$ subproblems. For each type
$i\in\mathcal{I}$, {the master needs to solve} the following subproblem
\begin{align}\label{eq:opt_symm_i}
&\max_{t_i}\pi_i(t_i)=\theta_i\log(1+N_it_i)-N_iK_it_i,\nonumber\\
&\text{subject\ to:}\ 0\leq t_i\leq \bar{t}_i.
\end{align}
By solving all $I$ subproblems, we have the following
result.

\begin{theorem}[Optimal Contract under Complete Information]\label{thm:completeinfo}
At the equilibrium, the master will hire {the type-$i$} users {if and only if}
$\theta_i>K_i$. {The total} involved user type set is
\begin{equation}\label{eq:collaborate_set_symm}
\mathcal{I}_C =\{i\in\mathcal{I}: \theta_i>K_i\}.
\end{equation}
The subscript $C$ in $\mathcal{I}_C$ refers to the complete
information assumption.
For a user with type $i\in\mathcal{I}_C$, the
{equilibrium} contract item is
\begin{multline}\label{eq:contract_symm}
(r_i^*,t_i^*)=(K_it_i^*,t_i^*)\\=\left(\min\left(\frac{\theta_i-K_i}{N_i},K_i\bar{t}_i\right),\min\left(\frac{\theta_i-K_i}{K_iN_i},\bar{t}_i\right)\right).
\end{multline}
{For a user with type $i\notin \mathcal{I}_C$, the equilibrium contract item is $(r_i^*,t_i^*)=(0,0)$.}
{All users} (no matter joining collaboration or not)
receive a zero payoff. {The master's} equilibrium profit is
\begin{align}\label{eq:client_symm}
&\pi^*=\sum_{i\in\mathcal{I}_C} \nonumber \\
&\min\left(\theta_i\log\left(\frac{\theta_i}{K_i}\right)-\theta_i+K_i,\theta_i\log(1+N_i\bar{t}_i)-N_iK_i\bar{t}_i\right).
\end{align}
\end{theorem}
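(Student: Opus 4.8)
The plan is to build directly on the two reductions already established just before the theorem: that the IR constraints (\ref{eq:IR}) bind at optimality, so that $r_i^\ast = K_i t_i^\ast$, and that Problem (\ref{eq:opt_symm}) decouples across types into the $I$ scalar subproblems (\ref{eq:opt_symm_i}). Everything then reduces to solving, for each type $i$, the single-variable box-constrained program $\max_{0 \le t_i \le \bar t_i} \pi_i(t_i)$ with $\pi_i(t_i) = \theta_i \log(1 + N_i t_i) - N_i K_i t_i$, and then assembling the per-type answers into the claimed contract, the set $\mathcal I_C$, and the aggregate profit.

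First I would establish concavity and read off the hiring rule. Differentiating gives $\pi_i'(t_i) = N_i\bigl(\tfrac{\theta_i}{1 + N_i t_i} - K_i\bigr)$ and $\pi_i''(t_i) = -\theta_i N_i^2/(1+N_i t_i)^2 < 0$, so $\pi_i$ is strictly concave on $[0,\bar t_i]$ with a unique maximizer. The sign of $\pi_i'(0) = N_i(\theta_i - K_i)$ then dictates participation. If $\theta_i \le K_i$, then $\pi_i'(t_i) \le 0$ on $[0,\bar t_i]$ (strictly negative for $t_i>0$), so the optimum is $t_i^\ast = 0$, hence $r_i^\ast=0$ and a zero profit contribution; this excludes type $i$ and yields $\mathcal I_C = \{i : \theta_i > K_i\}$ exactly as in (\ref{eq:collaborate_set_symm}), which settles the hiring characterization and the $(0,0)$ item for excluded types. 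If $\theta_i > K_i$, setting $\pi_i'(t_i)=0$ gives the interior stationary point $t_i^{\mathrm{u}} = (\theta_i - K_i)/(K_i N_i) > 0$.

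Next I would impose the cap. By strict concavity the constrained maximizer is the projection $t_i^\ast = \min\bigl(t_i^{\mathrm{u}}, \bar t_i\bigr) = \min\bigl(\tfrac{\theta_i - K_i}{K_i N_i}, \bar t_i\bigr)$, and IR-tightness gives $r_i^\ast = K_i t_i^\ast = \min\bigl(\tfrac{\theta_i - K_i}{N_i}, K_i \bar t_i\bigr)$, which is precisely (\ref{eq:contract_symm}). The zero-payoff claim is then immediate: for $i \in \mathcal I_C$ the accepted item satisfies $u_i = r_i^\ast - K_i t_i^\ast = 0$ by IR-tightness, while for $i \notin \mathcal I_C$ the item is $(0,0)$ with $u_i = 0$.

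Finally I would recover the profit (\ref{eq:client_symm}) by substituting $t_i^\ast$ into $\pi_i$. When the cap is slack ($t_i^{\mathrm{u}} \le \bar t_i$), the identity $1 + N_i t_i^{\mathrm{u}} = \theta_i/K_i$ collapses the logarithm and the contribution becomes $\theta_i \log(\theta_i/K_i) - \theta_i + K_i$; when the cap binds ($t_i^{\mathrm{u}} > \bar t_i$) the contribution is $\theta_i \log(1 + N_i \bar t_i) - N_i K_i \bar t_i$, and summing these two regime values over $i \in \mathcal I_C$ gives the stated $\pi^\ast$. The optimization itself is a routine strictly-concave scalar maximization, so I do not expect any hard inequality; the main care is the box-constraint bookkeeping — correctly splitting into the slack-cap and binding-cap regimes, tracking how the projection $t_i^\ast=\min(\cdot,\cdot)$ feeds into the piecewise profit, and checking that the corner $\theta_i = K_i$ falls on the no-hire side. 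That piecewise case analysis is where I anticipate spending the most attention.
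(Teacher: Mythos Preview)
Your proposal is correct and follows essentially the same approach as the paper's proof: establish strict concavity of each decoupled subproblem $\pi_i(t_i)$, read off the hiring rule from the sign of $\pi_i'(0)=N_i(\theta_i-K_i)$, solve the first-order condition subject to the box constraint $0\le t_i\le \bar t_i$, and substitute back to obtain the profit. The paper's own argument in Appendix~G is simply a more compressed version of exactly these steps.
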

The proof of Theorem~\ref{thm:completeinfo} is given in Appendix~{G}.
%

Intuitively, {the master needs to compensate a
collaborator's cost, }thus he will hire type-$i$ users only when his
preference characteristic $\theta_{i}$ is larger than the unit cost of that type
$K_i$.
Users will receive a zero payoff since their private information
about unit costs are known to the master.

By looking into all parameters in the equilibrium contract in
(\ref{eq:contract_symm}) and payoff $\pi^*$ in
(\ref{eq:client_symm}), we have the following observation.

\begin{observation}\label{ob:contract_symm}
For $i\in\mathcal{I}_C$, the equilibrium task $t_i^*$ to a type-$i$
user increases in $\theta_i$, and decreases in $N_i$ and
$K_i$. The master may or may not {offer} a larger task or reward to a
higher type-$i$ collaborator, depending on $N_i$ and
$\theta_i$ for that type. Also, the master's equilibrium profit $\pi^*$  increases
in $\theta_i$, $N_i$, and $\bar{t}_i$, and decreases in $K_i$.
\end{observation}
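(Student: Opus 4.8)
The plan is to exploit the closed forms supplied by Theorem~\ref{thm:completeinfo}: the equilibrium task $t_i^* = \min((\theta_i - K_i)/(K_i N_i), \bar{t}_i)$, the reward $r_i^* = K_i t_i^*$, and the \emph{separable} profit $\pi^* = \sum_{i \in \mathcal{I}_C} \pi_i^*$, where each term $\pi_i^* = \min(\theta_i\log(\theta_i/K_i) - \theta_i + K_i, \theta_i \log(1 + N_i \bar{t}_i) - N_i K_i \bar{t}_i)$ is precisely the optimal value of the single-type subproblem~(\ref{eq:opt_symm_i}) and depends only on the parameters $(\theta_i, K_i, N_i, \bar{t}_i)$ of that one type. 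Because both the objective and every constraint decouple across types, all comparative statics reduce to analyzing one type at a time, and the claims about $\pi^*$ follow termwise.

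For the monotonicity of $t_i^*$, I would first treat the interior branch $g(\theta_i, K_i, N_i) := (\theta_i - K_i)/(K_i N_i)$. Differentiating gives $\partial g/\partial \theta_i = 1/(K_i N_i) > 0$, then $\partial g/\partial N_i = -(\theta_i - K_i)/(K_i N_i^2) < 0$ (using $\theta_i > K_i$ for $i \in \mathcal{I}_C$), and $\partial g/\partial K_i = -\theta_i/(K_i^2 N_i) < 0$. Since $t_i^* = \min(g, \bar{t}_i)$ is the pointwise minimum of $g$ with a constant, it inherits the weak monotonicity of $g$ in each argument, yielding ``increases in $\theta_i$, decreases in $N_i$ and $K_i$.'' For the cross-type claim, I would note that a higher type means a smaller $K_i$, which in isolation raises the task, but $t_i^*$ (and $r_i^*$) also move with $\theta_i$ and $N_i$, which the text explicitly does not require to be ordered by type index. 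Hence I would simply exhibit two parameter instances --- one in which a higher type receives a larger pair $(t_i^*, r_i^*)$ and one in which it receives a smaller pair --- to justify the ``may or may not'' statement.

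For the profit comparative statics I would invoke the envelope theorem on the value function $\pi_i^*$ of subproblem~(\ref{eq:opt_symm_i}). Holding the optimizer $t_i^*$ fixed, $\partial \pi_i^*/\partial \theta_i = \log(1 + N_i t_i^*) \ge 0$ and $\partial \pi_i^*/\partial K_i = -N_i t_i^* \le 0$, giving monotonicity in $\theta_i$ and $K_i$. For $\bar{t}_i$ I would argue structurally: enlarging $\bar{t}_i$ only relaxes the feasible set $\{0 \le t_i \le \bar{t}_i\}$, so the optimal value $\pi_i^*$ can only weakly increase. Summing over $i \in \mathcal{I}_C$ transfers each single-type result to $\pi^*$, since types enter additively and the set $\mathcal{I}_C$ itself does not shrink as $\theta_i$ rises or $K_i$ falls.

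The step that needs the most care is the dependence of $\pi^*$ on $N_i$, because $N_i$ enters both the logarithmic and the linear cost terms and also shifts the optimizer, so a naive derivative is misleading. Here I would split into the two regimes set by the $\min$. In the interior regime $t_i^* = (\theta_i - K_i)/(K_i N_i)$ one checks $1 + N_i t_i^* = \theta_i/K_i$, so $\pi_i^* = \theta_i \log(\theta_i/K_i) - \theta_i + K_i$ is \emph{independent} of $N_i$ (zero derivative). In the boundary regime $t_i^* = \bar{t}_i$, the binding condition $K_i(1 + N_i \bar{t}_i) < \theta_i$ is exactly $\theta_i/(1 + N_i \bar{t}_i) > K_i$, so $\partial \pi_i^*/\partial N_i = \bar{t}_i(\theta_i/(1 + N_i \bar{t}_i) - K_i) > 0$. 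Thus $\pi_i^*$ is non-decreasing in $N_i$ in both regimes, and ``increases in $N_i$'' is to be read as weak monotonicity; the same two-regime bookkeeping also confirms the $\bar{t}_i$ claim concretely.
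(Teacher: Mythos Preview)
Your proposal is correct and follows the same approach the paper intends: the paper does not give a formal proof of this Observation, stating only that it follows ``by looking into all parameters in the equilibrium contract in (\ref{eq:contract_symm}) and payoff $\pi^*$ in (\ref{eq:client_symm}).'' Your argument makes that inspection explicit---differentiating the closed forms from Theorem~\ref{thm:completeinfo} and, in particular, carefully splitting the $N_i$ dependence into the interior and boundary regimes---so it is a rigorous version of what the paper merely asserts.
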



%



Notice that a higher type-$i$ collaborator has less unit cost where
the master needs to compensate, but the master may not give him a
larger task or reward. {This can happen when there are too many
collaborators of that type, or the master evaluates this type {with a
small value of $\theta_{i}$.}}

\subsection{Master's Contract Design under Asymmetrically Incomplete
Information}\label{subsec:contract_asymm}

In this subsection, we study the case where the master only has
asymmetrically incomplete information about each user's type. A
user's actual type is only known to himself, and the master and the
other users only have a rough estimation on this. We consider that
others believe a user belonging to type-$i$ {with a probability $q_i$}.
Everyone knows the total number of users $N$.\footnote{{Users can know $N$ by checking the master's or some third party's market survey, or the news on recent penetration or shipment of smartphones.}}

\subsubsection{Feasibility of contract under asymmetrically incomplete information}

According to \cite{bolton2005contract}, the master's
contract should first be feasible {in this scenario}. A
feasible contract must satisfy both individual rationality (IR)
constraints (Definition~\ref{def:IR}  in
Section~\ref{subsec:contract_symm}) and incentive compatibility
constraints defined as follows.

\begin{definition}[IC: Incentive Compatibility]\label{def:IC}
A contract satisfies the incentive
compatibility constraints if each type-$i$ user prefers to choose
the contract item for his own type, i.e.,
\begin{equation}\label{eq:IC}
r_i-K_it_i\geq r_j-K_it_j,\;\; \forall i,j\in\mathcal{I}.
\end{equation}
\end{definition}

{{Under asymmetrically incomplete information}, the master does not know the number of users
$N_{i}$ of type-$i$.
Let us denote the} users' numbers of all types as
{$\{n_i\}_{i\in\mathcal{I}}$}, which are random variables {following}
certain distributions and {satisfying}
{$\sum_{i\in\mathcal{I}}n_i=N$}. {Note that the realizations of $\{n_i\}_{i\in\mathcal{I}}$ depend on $N$ and probabilities $\{q_i\}_{i\in\mathcal{I}}$ of all types that a user may belong to.} The master's profit {for a
particular realization of $\{n_i\}_{i\in\mathcal{I}}$} is
\begin{equation}\label{eq:Epi}
\pi(\{(r_i,t_i)\}_{i\in\mathcal{I}},\{n_i\}_{i\in\mathcal{I}})=\sum_{i\in\mathcal{I}}{(\theta_i\log(1+n_it_i)-n_ir_i)}.
\end{equation}
Thus the master's
expected profit is
\begin{align}\label{eq:Expected_pi}
& \mathbb{E}_{\{n_i\}_{i\in\mathcal{I}}}[\pi(\{(r_i,t_i)\}_{i\in\mathcal{I}},\{n_i\}_{i\in\mathcal{I}})]\nonumber\\&=\sum_{n_1=0}^N\sum_{n_2=0}^{N-n_1}...\sum_{n_{I-1}=0}^{N-\sum_{j=1}^{I-2}n_j}\frac{N!q_1^{n_1}...q_{I-1}^{n_{I-1}}q_I^{N-\sum_{j=1}^{I-1}n_j}}{n_1!...n_{I-1}!(N-\sum_{j=1}^{I-1}n_j)!}\nonumber\\
& \ \ \ \ \cdot \pi(\{(r_i,t_i)\}_{i\in\mathcal{I}},\{n_i\}_{i\in\mathcal{I}}).
\end{align}
{The master's profit optimization problem as\footnote{{{After observing all $I$ reward-task combinations in $\{(r_i,t_i)\}_{i\in\mathcal{I}}$, it is optimal for any type-$i$ user to choose only $(r_i, t_i)$ of his own type later, according to IR and IC constraints in the contract mechanism.}}}}
\begin{align}\label{eq:opt_asymm}
&\max_{\{(r_i,t_i)\}_{i\in\mathcal{I}}}\mathbb{E}_{\{n_i\}_{i\in\mathcal{I}}}[\pi(\{(r_i,t_i)\}_{i\in\mathcal{I}},\{n_i\}_{i\in\mathcal{I}})]\nonumber
\\ &\text{subject\ to:}\  \text{IR\ constraints\ in\ (\ref{eq:IR})},\nonumber\\
&\ \ \ \ \ \ \ \ \ \ \ \ \ \text{IC\ constraints\ in\ (\ref{eq:IC})},\nonumber\\
&\ \ \ \ \ \ \ \ \ \ \ \ \ 0\leq t_i\leq \bar{t}_i, \forall
i\in\mathcal{I}.
\end{align}
{The total number of IR and IC constraints is $I^{2}$.}  {Next, we show that it is possible to represent these $I^{2}$ constraints with a set of much fewer equivalent constraints. }

\begin{proposition}\label{prop:feasible}
{\emph{(Sufficient and Necessary Conditions for feasibility):}} For a contract
$\mathcal{C}=\{(r_i,t_i),\forall i\in\mathcal{I}\}$ {with} {user
costs} $K_1>...>K_I$, it is feasible if and only if all the
following conditions are satisfied:
\begin{enumerate}
\item $\mathtt{Condition}  (+)$: $r_1-K_1t_1\geq 0$;
\item $\mathtt{Condition}  (\uparrow)$: $0\leq r_1\leq ...\leq r_I$ and
$0\leq t_1\leq ...\leq t_K$;
\item
$\mathtt{Condition}  (\leq)$: For any $i=2,...,I$,
\begin{equation}\label{eq:contract_leq} r_{i-1}+K_i(t_i-t_{i-1})\leq
r_i\leq r_{i-1}+K_{i-1}(t_i-t_{i-1}).
\end{equation}
\end{enumerate}
\end{proposition}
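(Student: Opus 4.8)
The plan is to prove both directions of the equivalence separately, reducing the $I^2$ IR and IC constraints to the three listed conditions. Throughout I write $u_i := r_i - K_i t_i$ for the payoff of a type-$i$ user who selects his own item. For the necessity direction (feasibility $\Rightarrow$ conditions), $\mathtt{Condition}(+)$ is immediate, since it is exactly the IR constraint for type $1$. To obtain the monotonicity in $\mathtt{Condition}(\uparrow)$, I would add the two IC constraints that forbid types $i$ and $j$ (with $i>j$) from swapping items: $r_i - K_i t_i \geq r_j - K_i t_j$ and $r_j - K_j t_j \geq r_i - K_j t_i$. Summing cancels the rewards and yields $(K_i - K_j)(t_j - t_i) \geq 0$; since $K_i < K_j$, this forces $t_i \geq t_j$, so $t$ is nondecreasing in type. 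Feeding $t_i \geq t_j$ back into the single IC constraint $r_i - r_j \geq K_i(t_i - t_j) \geq 0$ gives $r_i \geq r_j$, establishing reward monotonicity. Finally, $\mathtt{Condition}(\leq)$ is just the pair of adjacent IC constraints between types $i-1$ and $i$: the downward constraint $r_i - K_i t_i \geq r_{i-1} - K_i t_{i-1}$ rearranges to the lower bound on $r_i$, and the upward constraint $r_{i-1} - K_{i-1}t_{i-1} \geq r_i - K_{i-1}t_i$ rearranges to the upper bound.

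The substantive direction is sufficiency (conditions $\Rightarrow$ feasibility). For the IR constraints, I would show the payoffs $u_i$ are nondecreasing in $i$. The left inequality of $\mathtt{Condition}(\leq)$ gives $u_i = r_i - K_i t_i \geq r_{i-1} - K_i t_{i-1}$, and since $K_i < K_{i-1}$ and $t_{i-1}\geq 0$ we have $r_{i-1}-K_i t_{i-1} \geq r_{i-1}-K_{i-1}t_{i-1} = u_{i-1}$; chaining these gives $u_i \geq u_{i-1}$. Combined with $u_1 \geq 0$ from $\mathtt{Condition}(+)$, every $u_i \geq 0$, so all IR constraints hold. The key step, where I expect the main difficulty, is recovering the full set of IC constraints from only the adjacent ones in $\mathtt{Condition}(\leq)$ by a telescoping argument. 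For $j < i$ I would sum the lower bounds $r_k - r_{k-1} \geq K_k(t_k - t_{k-1})$ over $k = j+1, \ldots, i$; using $K_k \geq K_i$ and $t_k - t_{k-1}\geq 0$ (from $\mathtt{Condition}(\uparrow)$) bounds the telescoped sum below by $K_i(t_i - t_j)$, which is precisely $r_i - K_i t_i \geq r_j - K_i t_j$. Symmetrically, for $j > i$ I would sum the upper bounds $r_k - r_{k-1} \leq K_{k-1}(t_k - t_{k-1})$ over $k = i+1, \ldots, j$; now $K_{k-1} \leq K_i$ bounds the sum above by $K_i(t_j - t_i)$, again yielding the desired IC inequality.

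The crux of the argument is the correct pairing of the cost ordering with the direction of summation: in the $j<i$ case one exploits that every intermediate cost $K_k$ exceeds $K_i$, while in the $j>i$ case one exploits that every intermediate cost $K_{k-1}$ falls below $K_i$. Getting these inequalities to point the right way, and verifying that the nonnegativity of the task increments $t_k - t_{k-1}$ supplied by $\mathtt{Condition}(\uparrow)$ is exactly what licenses replacing each running cost by the single endpoint cost $K_i$, is the only place requiring real care; the remaining manipulations are routine rearrangements.
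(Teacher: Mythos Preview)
Your proposal is correct. The necessity direction matches the paper's argument essentially line for line: $\mathtt{Condition}(+)$ is the type-$1$ IR constraint, $\mathtt{Condition}(\leq)$ is the pair of adjacent IC constraints rearranged, and the monotonicity of tasks and rewards follows from adding the two swap IC constraints between types $i$ and $j$ (the paper phrases the task-monotonicity step as a proof by contradiction, but the underlying inequality is identical to your direct derivation).

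For sufficiency your route differs from the paper's. The paper argues by induction on the number of types: it sets $\mathcal{C}(l)=\{(r_i,t_i)\}_{i=1}^l$, notes $\mathcal{C}(1)$ is feasible by $\mathtt{Condition}(+)$, and then shows that appending the item $(r_{l+1},t_{l+1})$ preserves feasibility by separately verifying (i) the IR and IC constraints for the new type $l{+}1$ against all earlier types, and (ii) the IC constraints for all earlier types against the new item. Your telescoping argument instead proves every IC inequality $r_i-K_it_i\geq r_j-K_it_j$ in one shot by summing the adjacent bounds in $\mathtt{Condition}(\leq)$ over the interval between $i$ and $j$ and replacing each intermediate cost by the endpoint cost $K_i$, exploiting $t_k-t_{k-1}\geq 0$ from $\mathtt{Condition}(\uparrow)$. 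Both approaches are standard in screening problems; the induction makes explicit that feasibility is preserved as the type menu grows, while your telescoping is shorter and shows more transparently why the local (adjacent) constraints propagate to the global ones. Your IR argument via $u_i\geq u_{i-1}$ is likewise a slight streamlining of what the paper obtains as a byproduct of the inductive step.
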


The proof of Proposition~\ref{prop:feasible} is given in Appendix~{H}.

Intuitively, $\mathtt{Condition}  (+)$ {ensures
that all types of users can get a nonnegative payoff by accepting
the contract item $(r_{1},t_{1})$, as it implies
$r_{1}-K_{j}t_{1}\geq 0$ for all $j\geq 2$. Thus this can replace
the IR constraints in (\ref{eq:IR}).}
$\mathtt{Condition}  (\uparrow)$ and $\mathtt{Condition}  (\leq)$
are related to IC constraints in (\ref{eq:IC}). $\mathtt{Condition}
(\uparrow)$ shows that a user with a higher type should be assigned
a larger task, because his unit cost is lower (and more efficient)
and the master needs to compensate this user less {per unit
work}. Also, a larger reward should be given to this user for the
larger task undertaken by him, {otherwise this user will choose
another contract item in order to work less}. $\mathtt{Condition}
(\leq)$ shows the relation between any two {neighboring contract
items.}

Based on Proposition~\ref{prop:feasible}, we can simplify the
master's problem {in (\ref{eq:opt_asymm})} as
\begin{align}\label{eq:opt_asymm_sim}
&\max_{\{(r_i,t_i)\}_{i\in\mathcal{I}}}\mathbb{E}_{\{n_i\}_{i\in\mathcal{I}}}[\pi(\{(r_i,t_i)\}_{i\in\mathcal{I}},\{n_i\}_{i\in\mathcal{I}})]\nonumber
\\ &\text{subject\ to,}\  \mathtt{Condition}  (+),\mathtt{Condition}  (\uparrow),\mathtt{Condition}  (\leq),\nonumber\\
&\ \ \ \ \ \ \ \ \ \ \ \ \ 0\leq t_i\leq \bar{t}_i, \forall
i\in\mathcal{I},
\end{align}
where the previous $I^2$  IR and IC constraints have been reduced to $I+2$ constraints.

\subsubsection{Analysis by sequential optimization}

Now we want to solve the master's optimal contract. However,
(\ref{eq:opt_asymm_sim}) is not easy to solve as it has
{coupled} variables and
{many} constraints.
{The way we solve is} a sequential optimization approach: we
first derive the optimal rewards
$\{r_i^*(\{t_i\}_{i\in\mathcal{I}})\}_{i\in\mathcal{I}}$ given any
feasible tasks $\{t_i\}_{i\in\mathcal{I}}$, then  further derive
the optimal tasks $\{t_i^*\}_{i\in\mathcal{I}}$ for the optimal
contract.

\begin{proposition}\label{prop:opt_r_t}
Let $\mathcal{C}=\{(r_i,t_i)\}_{i\in\mathcal{I}}$ be a feasible
contract with any feasible tasks $ 0\leq t_1\leq ...\leq t_I$. The unique optimal rewards {$\{r_i^*(\{t_i\}_{i\in\mathcal{I}})\}_{i\in\mathcal{I}}$} satisfy
\begin{align}\label{eq:opt_r_t1}
r_1^*\left(\{t_i\}_{i\in\mathcal{I}}\right)&=K_1t_1,\\
r_i^*\left(\{t_i\}_{i\in\mathcal{I}}\right)&=r_{i-1}^*+K_i(t_i-t_{i-1})\nonumber\\
&\!\!\!\!\!\!\!\!\!\!\!\!=K_1t_1+\sum_{j=2}^iK_j(t_j-t_{j-1}), \forall
i=2,...,I,\label{eq:opt_r_ti}
\end{align}
{Notice that the lowest type user obtains a zero payoff, and a user's optimal payoff is non-decreasing in his type.}
\end{proposition}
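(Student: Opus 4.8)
The plan is to exploit the fact that, once the tasks $\{t_i\}_{i\in\mathcal{I}}$ are held fixed, the master's expected profit in (\ref{eq:Expected_pi}) is \emph{linear and strictly decreasing} in each reward $r_i$. Indeed, the utility terms $\theta_i\log(1+n_it_i)$ do not involve any $r_i$, so taking the expectation over the multinomial $\{n_i\}_{i\in\mathcal{I}}$ leaves a reward-dependent part equal to $-\sum_{i\in\mathcal{I}}\mathbb{E}[n_i]\,r_i=-N\sum_{i\in\mathcal{I}}q_i r_i$. Since every $q_i>0$, maximizing the objective of (\ref{eq:opt_asymm_sim}) over the rewards (with the tasks fixed) is equivalent to \emph{minimizing the weighted sum} $\sum_{i\in\mathcal{I}}q_i r_i$ subject to $\mathtt{Condition}(+)$, $\mathtt{Condition}(\uparrow)$, and $\mathtt{Condition}(\leq)$ of Proposition~\ref{prop:feasible}.

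First I would isolate the constraints that bound the rewards from below. Among the feasibility conditions, the binding lower bounds are $r_1\geq K_1t_1$ (from $\mathtt{Condition}(+)$) and $r_i\geq r_{i-1}+K_i(t_i-t_{i-1})$ for $i\geq 2$ (the left inequality of $\mathtt{Condition}(\leq)$ in (\ref{eq:contract_leq})). The monotonicity lower bounds $r_i\geq r_{i-1}$ contained in $\mathtt{Condition}(\uparrow)$ are automatically implied, because $t_i\geq t_{i-1}$ gives $K_i(t_i-t_{i-1})\geq 0$. I would then define candidate rewards $\underline r_i$ by forcing all these lower bounds to hold with equality, which recovers exactly the recursion $\underline r_1=K_1t_1$ and $\underline r_i=\underline r_{i-1}+K_i(t_i-t_{i-1})$, i.e., the closed form (\ref{eq:opt_r_ti}).

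The key step is to show this pointwise-smallest candidate is itself feasible and is dominated by every feasible reward vector. Feasibility of $\underline r$ under $\mathtt{Condition}(+)$ and the left half of $\mathtt{Condition}(\leq)$ holds by construction (with equality); $\mathtt{Condition}(\uparrow)$ follows since each increment $K_i(t_i-t_{i-1})$ is nonnegative; and the \emph{upper} half of $\mathtt{Condition}(\leq)$, namely $\underline r_i\leq \underline r_{i-1}+K_{i-1}(t_i-t_{i-1})$, reduces to $K_i(t_i-t_{i-1})\leq K_{i-1}(t_i-t_{i-1})$, which is precisely where the cost ordering $K_{i-1}>K_i$ is used. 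For pointwise minimality I would run a short induction: any feasible $r$ satisfies $r_1\geq K_1t_1=\underline r_1$, and $r_{i-1}\geq\underline r_{i-1}$ together with the lower bound forces $r_i\geq \underline r_{i-1}+K_i(t_i-t_{i-1})=\underline r_i$. Since $\underline r$ is componentwise no larger than any feasible reward vector and all weights $q_i>0$, it is the \emph{unique} minimizer of $\sum_i q_i r_i$, hence the unique optimal reward profile.

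I expect the only real obstacle to be the feasibility check for $\underline r$: a priori the weighted-sum minimization could push rewards below the band allowed by $\mathtt{Condition}(\leq)$, and it is the strict ordering $K_1>\cdots>K_I$ that rescues the upper inequality. Finally I would record the two appended remarks. The lowest type's payoff from (\ref{eq:u_i}) is $r_1^*-K_1t_1=K_1t_1-K_1t_1=0$, and substituting $r_i^*=r_{i-1}^*+K_i(t_i-t_{i-1})$ into the payoff gives $U_i-U_{i-1}=(K_{i-1}-K_i)t_{i-1}\geq 0$, so payoffs are non-decreasing in type.
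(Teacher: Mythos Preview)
Your argument is correct and follows essentially the same approach as the paper's proof sketch: both establish that $\mathtt{Condition}(+)$ and the left inequality of $\mathtt{Condition}(\leq)$ bind at optimality, then read off the payoff monotonicity from $r_i^*-K_it_i=r_{i-1}^*-K_it_{i-1}\geq r_{i-1}^*-K_{i-1}t_{i-1}$. Your write-up is in fact more complete than the paper's sketch, since you make explicit the linearity/strict-monotonicity of the objective in the rewards, verify that the candidate $\underline r$ actually satisfies the \emph{upper} half of $\mathtt{Condition}(\leq)$ (using $K_{i-1}>K_i$), and justify uniqueness via the pointwise-minimality induction and $q_i>0$.
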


\begin{proof}[Proof (Sketch)] First, we can prove (\ref{eq:opt_r_t1}) by
showing that $\mathtt{Condition}  (+)$ binds at the optimality. This
 guarantees the IR constraints of the contract.
Second, we can prove (\ref{eq:opt_r_ti}) by showing that the
left-hand side inequality in $\mathtt{Condition}  (\leq)$ binds at the
optimality. This guarantees the IC constraints of the
contract. {Finally, (\ref{eq:opt_r_t1}) shows a zero payoff for the lowest type user, and (\ref{eq:opt_r_ti}) shows that for any $\{t_i\}_{i=1}^I$,
\[
r_i^*-K_it_i=r_{i-1}^*-K_it_{i-1},
\]
which is no smaller than $r_{i-1}^*-K_{i-1}t_{i-1}$ due to $K_i<K_{i-1}$. Thus a user's payoff is non-decreasing in his type.}
\end{proof}

Based on Proposition~\ref{prop:opt_r_t}, we can greatly simplify the
master's optimization Problem in (\ref{eq:opt_asymm_sim}) as
\begin{align}\label{eq:opt_asymm_final}
&\max_{\{t_i\}_{i\in\mathcal{I}}}\mathbb{E}_{\{n_i\}_{i\in\mathcal{I}}}\left[\pi(\{(r_i^*(\{t_i\}_{i\in\mathcal{I}}),t_i)\},\{n_i\}_{i\in\mathcal{I}})\right]\nonumber\\
&\text{subject\ to,}\ 0\leq t_1\leq ...\leq t_I,\nonumber\\
& \ \ \ \ \ \ \ \ \ \ \ \ \ t_i\leq\bar{t}_i, \forall
i\in\mathcal{I}.
\end{align}

Next we first examine whether Problem (\ref{eq:opt_asymm_final}) is a convex problem and then derive a way to solve it. The
first derivative of the master's expected profit over $t_i$ is
\begin{align}\label{eq:der_E_pi}
&\frac{\partial \mathbb{E}_{\{n_i\}_{i\in\mathcal{I}}}[\pi(\{(r_i^*(\{t_i\}_{i\in\mathcal{I}}),t_i)\},\{n_i\}_{i\in\mathcal{I}})]}{\partial t_i}\nonumber\\
&=\sum_{n_1=0}^N\sum_{n_2=0}^{N-n_1}...\sum_{n_{I-1}=0}^{N-\sum_{j=1}^{I-2}n_j}\frac{N!q_1^{n_1}...q_{I-1}^{n_{I-1}}q_I^{N-\sum_{j=1}^{I-1}n_j}}{n_1!...n_{I-1}!(N-\sum_{j=1}^{I-1}n_j)!}\nonumber\\
& \ \ \ \
\left(\frac{n_i\theta_i}{1+n_it_i}-n_iK_i-(K_i-K_{i+1})\sum_{\forall
j>i, \forall j\in\mathcal{I}}n_j\right),\forall i\in\mathcal{I},
\end{align}
which is independent of $t_j$ for any $j\neq i$. From (\ref{eq:der_E_pi}), we can easily check the Hessian matrix of the objective function in Problem (\ref{eq:opt_asymm_final}) and conclude that it is concave in $\{t_i,i\in\mathcal{I}\}$. Furthermore, the constraints in Problem (\ref{eq:opt_asymm_final}) are all linear and the feasible set in Problem (\ref{eq:opt_asymm_final}) is convex {and not empty}. Thus we conclude that Problem (\ref{eq:opt_asymm_final}) is convex. It should be noted that the feasible set of Problem (\ref{eq:opt_asymm_final}) has interior such that the strick inequalities hold for all constraints in Problem (\ref{eq:opt_asymm_final}). For example, for any positive value $\eta$, such an interior point {feasible for Problem (\ref{eq:opt_asymm_final})} could be $$t_i=\frac{\min(\{\bar{t}_i,\forall i\in\mathcal{I}\})}{I-i+1+\eta}, \forall i\in\mathcal{I}.$$ Thus Problem (\ref{eq:opt_asymm_final}) is a convex problem satisfying Slater's condition (implying strong duality) and always has a solution, and can be optimally solved by examining KKT conditions.

The Lagrangian function is \begin{align}&\quad L(\{t_i,i\in\mathcal{I}\}, \{\lambda_i,i\in\mathcal{I}\setminus \{I\}\},\{v_i,i\in\mathcal{I}\})\nonumber\\ & =\mathbb{E}_{\{n_i\}_{i\in\mathcal{I}}}\left[\pi(\{(r_i^*(\{t_i\}_{i\in\mathcal{I}}),t_i)\},\{n_i\}_{i\in\mathcal{I}})\right]+\sum_{i\in\mathcal{I}}v_i(\bar{t}_i-t_i)\nonumber\\
&\quad +\sum_{i\in\mathcal{I}\setminus \{I\}}\lambda_i(t_{i+1}-t_i),\end{align}
where $\{\lambda_i,i\in\mathcal{I}\setminus \{I\}\}$ and $\{v_i,i\in\mathcal{I}\}$ are Lagrange multipliers corresponding to the constraints in Problem (\ref{eq:opt_asymm_final}). The KKT conditions are as follows.
\begin{itemize}
\item \emph{Primal constraints:} $t_i^*\leq t_{i+1}^*, \forall i\in\mathcal{I}\setminus \{I\}$; $t_i^*\leq \bar{t}_i,\forall i\in\mathcal{I}$;
\item \emph{Dual constraints:} $\lambda_i^*\geq 0, \forall i\in\mathcal{I}\setminus \{I\}$, and $v_i^*\geq 0, \forall i\in\mathcal{I}$;
\item \emph{Complementary slackness:} $\lambda_i^*(t_{i+1}^*-t_i^*)=0, \forall i\in\mathcal{I}\setminus \{I\}$, and $v_i^*(\bar{t}_i-t_i^*)=0, \forall i\in\mathcal{I}$;
\item \emph{First-order condition of Lagrangian with respect to $t_i$:}
\begin{align}\label{eq:firstorder}
{\partial L}/{\partial t_1}=&{\partial \mathbb{E}_{\{n_i\}_{i\in\mathcal{I}}}\left[\pi(\{(r_i^*(\{t_i\}_{i\in\mathcal{I}}),t_i)\},\{n_i\}_{i\in\mathcal{I}})\right]}/{\partial t_1}\nonumber\\&-\lambda_1-v_1=0,\nonumber\\
{\partial L}/{\partial t_i}=&{\partial \mathbb{E}_{\{n_i\}_{i\in\mathcal{I}}}\left[\pi(\{(r_i^*(\{t_i\}_{i\in\mathcal{I}}),t_i)\},\{n_i\}_{i\in\mathcal{I}})\right]}/{\partial t_i}\nonumber\\&-(\lambda_i-\lambda_{i-1})-v_i=0,\forall i\in\mathcal{I}\setminus \{1,I\},\nonumber\\
{\partial L}/{\partial t_I}=&{\partial \mathbb{E}_{\{n_i\}_{i\in\mathcal{I}}}\left[\pi(\{(r_i^*(\{t_i\}_{i\in\mathcal{I}}),t_i)\},\{n_i\}_{i\in\mathcal{I}})\right]}/{\partial t_I}\nonumber\\&+\lambda_{I-1}-v_I=0,\end{align}
from which we cannot derive closed-form solutions but can rely on numerical methods (e.g., primal dual algorithm) to show numerical results later on.\footnote{It should be mentioned that some multiplier $\lambda_i$ (corresponding to the constraint $t_{i+1}\geq t_i$) may be nonzero when the master has much smaller preference characteristics on higher user type-$(i+1)$ than type-$i$, or the higher type involves many more users than the lower type.
Some multiplier $v_i$ (corresponding to $t_i\leq \bar{t}_i$) may be nonzero when the capacity upper bound $\bar{t}_i$ of type-$i$ is small.}
{The computation complexity to solve Problem (\ref{eq:opt_asymm_final}) is not high and the complexity upperbound can be derived in the following way. Due to the task relationships among different user types (i.e., $t_1\leq ...\leq t_I$ and $t_I\leq \bar{t}_I$), the possible range of each task $t_i$ is $[0,\bar{t}_I]$. We can approximate the continuity of this range through a proper discretization, i.e., representing all possibilities of any $t_i$ by $T$ equally spaced values (with the first and last values equal to $0$ and $\bar{t}_I$, respectively). By {(approximately) solving} all the KKT conditions especially (\ref{eq:firstorder}) above, we require computation in order $\mathcal{O}(T)$ to search over all $T$ possibilities for each optimal $t_i^*$ for type-$i$. The overall computation complexity for all $I$ types is $\mathcal{O}(I\cdot T)$ in Problem (\ref{eq:opt_asymm_final}). {The choice of $T$ will affect the quantization error of the computation.}
}
\end{itemize}


Actually, {without} explicitly solving Problem~(\ref{eq:opt_asymm_final}), we can
still derive some interesting results by looking into the KKT conditions as follows.
\begin{theorem}\label{thm:incompleteinfo}
{The total involved user type set} under asymmetrically incomplete information is
\begin{align}\label{eq:collaborate_set_asymm}
&\!\!\!\mathcal{I}_{A}=\{i\in\mathcal{I}: \nonumber\\
&\!\!\!\mathbb{E}_{\{n_i\}_{i\in\mathcal{I}}}[n_i(\theta_i-K_i)-(K_i-K_{i+1})\sum_{\forall
j>i, j\in \mathcal{I}}n_j]>0\},
\end{align}
where the subscript $A$ in $\mathcal{I}_A$ refers to the asymmetrically
incomplete information assumption.\footnote{{Note that the
master will design $(r^*,t^*)=(0,0)$ for the types not in set
$\mathcal{I}_A$. Thus the users of these types are not involved as
collaborators.}} Compared {with} the {collaborator set
$\mathcal{I}_{C}$} under complete information case, {here the
master involves less collaborators, i.e., $|\mathcal{I}_{A}|\leq
|\mathcal{I}_{C}|$.} {Moreover, the master assigns a larger
task and gives a larger reward to  a higher type of collaborator,
which may not be the case under complete information (see
Observation~\ref{ob:contract_symm})}. {Only the lowest
type of collaborator(s) in set $\mathcal{I}_{A}$ obtains a zero
payoff, and higher types of collaborators in set $\mathcal{I}_{A}$
obtain positive payoffs that are {increasing} in their types.}
\end{theorem}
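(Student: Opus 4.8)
The plan is to exploit the structural facts already assembled before the statement: Problem~(\ref{eq:opt_asymm_final}) is convex, and by (\ref{eq:der_E_pi}) its objective is additively separable across types, since $\partial\mathbb{E}[\pi]/\partial t_i$ depends on $t_i$ alone. Write this marginal as $D_i(t_i):=\mathbb{E}_{\{n_i\}}\big[\tfrac{n_i\theta_i}{1+n_it_i}-n_iK_i-(K_i-K_{i+1})\sum_{j>i}n_j\big]$, which is strictly decreasing in $t_i$, so the objective equals $\sum_i\phi_i(t_i)$ with each $\phi_i$ strictly concave and $\phi_i'=D_i$. Every claim of the theorem then follows by combining the sign structure of $D_i$ with the feasibility description of Proposition~\ref{prop:feasible} and the optimal-reward formula of Proposition~\ref{prop:opt_r_t}.

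First I would pin down the involved set. Because $D_i$ is strictly decreasing, the per-type optimum is $\hat t_i=0$ exactly when $D_i(0)\le 0$ and strictly positive when $D_i(0)>0$; and $D_i(0)$ is precisely the expression defining $\mathcal{I}_A$ in (\ref{eq:collaborate_set_asymm}). Hence $t_i^*>0$ (type $i$ collaborates) iff $i\in\mathcal{I}_A$. To get $|\mathcal{I}_A|\le|\mathcal{I}_C|$ I would bound $D_i(0)=\mathbb{E}[n_i](\theta_i-K_i)-(K_i-K_{i+1})\mathbb{E}[\sum_{j>i}n_j]\le\mathbb{E}[n_i](\theta_i-K_i)$, using $K_i>K_{i+1}$ and $n_j\ge 0$ so the subtracted term is nonnegative; since $\mathbb{E}[n_i]=Nq_i>0$, the inequality $D_i(0)>0$ forces $\theta_i>K_i$, i.e. $i\in\mathcal{I}_C$. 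Thus $\mathcal{I}_A\subseteq\mathcal{I}_C$ and the cardinality bound follows (with equality of the two tests for the top type $I$, where the empty sum kills the extra term).

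The monotonicity and payoff statements I would read directly off the two propositions. Condition~$(\uparrow)$ of Proposition~\ref{prop:feasible} already gives $t_1\le\cdots\le t_I$ and $r_1\le\cdots\le r_I$, so a higher (lower-cost) type receives a weakly larger task and reward, and $r_i^*>r_{i-1}^*$ iff $t_i^*>t_{i-1}^*$ by Proposition~\ref{prop:opt_r_t}; this ordering is the qualitative contrast with the complete-information case, where the assignment need not be monotone (Observation~\ref{ob:contract_symm}). For payoffs, Proposition~\ref{prop:opt_r_t} yields $r_i^*-K_it_i^*=r_{i-1}^*-K_it_{i-1}^*$, so the gap between consecutive collaborator payoffs is $(K_{i-1}-K_i)t_{i-1}^*$, which is positive once $t_{i-1}^*>0$. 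Since the monotonicity constraint forces the collaborators to form an upper set $\{k,\dots,I\}$ (if $t_i^*>0$ then $t_j^*\ge t_i^*>0$ for all $j>i$), the types $1,\dots,k-1$ have $t^*=0$, whence $r_k^*=K_kt_k^*$ and the lowest collaborator $k$ earns exactly $r_k^*-K_kt_k^*=0$, while all higher collaborators earn strictly positive payoffs increasing in type.

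The main obstacle is the coupling introduced by the chain constraints $t_i\le t_{i+1}$, which the separable per-type reasoning ignores. When the unconstrained optima are not already ordered---for instance a lower type with $D_i(0)>0$ lying above a higher type with $D_{i+1}(0)\le 0$, possible because the $\theta_i$ are not monotone---the constraint binds and the two variables pool, so the correct local involvement test becomes the sign of an \emph{aggregated} marginal rather than $D_i(0)$ alone. This is exactly the regime where the multipliers $\lambda_i$ in (\ref{eq:firstorder}) are nonzero. Making the above clean characterization rigorous in that case requires working through the full KKT system (\ref{eq:firstorder}) together with the complementary-slackness conditions on $\{\lambda_i\}$ and $\{v_i\}$ to verify that the support and the strict orderings are preserved at the optimum; this bookkeeping is the technical heart I would relegate to the appendix.
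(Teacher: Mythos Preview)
Your approach is essentially the same as the paper's: use the separability of the derivative~(\ref{eq:der_E_pi}) to read off the involvement criterion from the sign of $D_i(0)$, invoke $\mathtt{Condition}(\uparrow)$ of Proposition~\ref{prop:feasible} for the monotonicity of tasks and rewards, and use the binding relations of Proposition~\ref{prop:opt_r_t} to show the lowest collaborator's payoff is zero and payoffs increase strictly with type. You are in fact more explicit than the paper on two points: you prove the set containment $\mathcal{I}_A\subseteq\mathcal{I}_C$ (the paper only asserts the cardinality bound after ``comparing'' (\ref{eq:collaborate_set_symm}) and (\ref{eq:collaborate_set_asymm})), and you correctly flag the coupling issue from the chain constraints $t_i\le t_{i+1}$, which can in principle force $t_{i+1}^*>0$ even when $D_{i+1}(0)\le 0$. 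The paper's own proof does not address this pooling scenario either---it simply declares involvement ``only when'' the bracket in~(\ref{eq:der_E_pi}) is positive at $t_i=0$---so your caveat is well placed rather than a defect in your argument relative to the paper's.
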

The proof of Theorem~\ref{thm:incompleteinfo} is given in Appendix~{I}.

\begin{figure}[tt]
\centering
\includegraphics[width=0.45\textwidth]{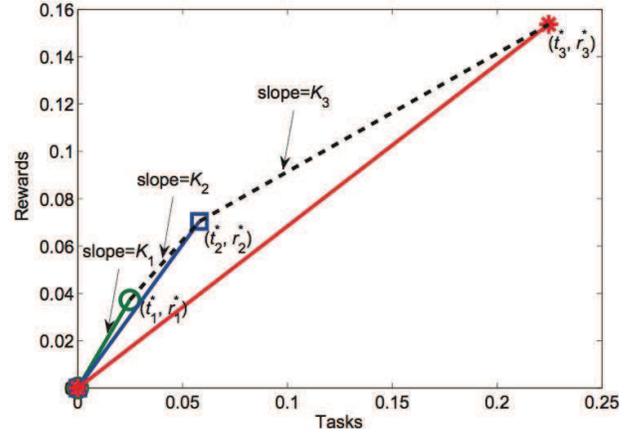}
\caption{The master's {optimal} contract items for three types
($I$=3). Other parameters are $N=120$, $K_1=1.5$, $K_2=1$,
$K_3=0.5$, $\theta_i=5$, and $q_i=1/3$ for any $i\in\mathcal{I}$.}
\label{fig:contract_t_r}
\end{figure}

Intuitively, as the master {does not know each user's type, he
needs to provide incentives (in terms of positive payoffs) to the
users to attract them revealing their own types truthfully.}
{If he involves a low type user, he needs to give increasingly
higher payoffs to all higher types. Thus he should target at users with high enough types.}  We have $|\mathcal{I}_{A}|$ smaller than
$|\mathcal{I}_{C}|$, {which means that some low types belong to
set $\mathcal{I}_{C}$ {may not be} included in set
$\mathcal{I}_{A}$.} {By comparing
(\ref{eq:collaborate_set_asymm}) and (\ref{eq:collaborate_set_symm})
for the highest type-$I$, we know that that this type is involved in
both information scenarios.}

Recall that under complete information,
Observation~\ref{ob:contract_symm} shows that the master may
not give {a} larger task and reward to a higher type-$i$
collaborator. This can happen when $\theta_i$ is small or the number
of users of that type is large. {Under asymmetrically incomplete
information, however, }the IC constraints require {the reward
and task to be nondecreasing in the collaborator types, independent
of $\theta_{i}$ and the number of users in each type (which is a
random variable).} {Otherwise, some collaborators will have
incentives to choose contract items not designed for their own
types, and thus violate IC constraints. This is not optimal for the master based on the}
Revelation Principle\cite{bolton2005contract}.

\begin{figure}[t]
\centering
\includegraphics[width=0.42\textwidth]{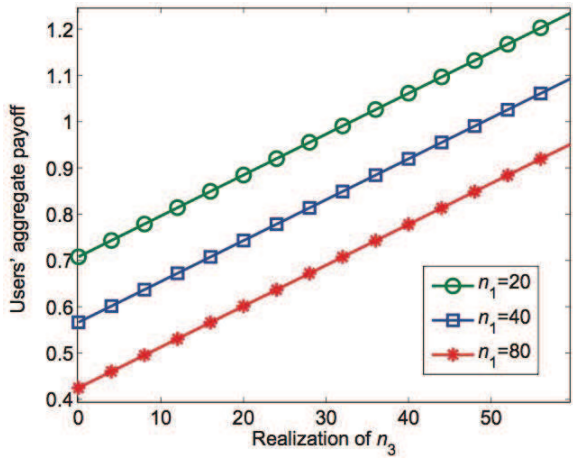}
\caption{Users' aggregate payoff under asymmetrically incomplete information as
a function of users' realized numbers $\{n_i\}_{i=1}^3$ in three
types ($I$=3). {Here we only show $n_{1}$ and $n_{3}$, and
$n_{2}$ can be computed as $N-n_{1}-n_{3}$.} Other parameters are
$N=120$, $K_1=1.1$, $K_2=1$, $K_3=0.9$, $\theta_i=5$, and $q_i=1/3$ for
any $i\in\mathcal{I}$.
}
\label{fig:aggregateusers}
\end{figure}

Figure~\ref{fig:contract_t_r} shows the master's optimal contract
$\{(r_i^*,t_i^*)\}_{i=1}^3$ for three collaborator types. {A
higher type-$i$ user obtains a larger task $t_{i}^{\ast}$, a larger
reward $r_{i}^{\ast}$, and a larger payoff ($u_i^*=r_i^*-t_i^*$). {This is consistent with Proposition~\ref{prop:opt_r_t}.}}
The slope of {the} dashed line between two points
$(r_i^*,t_i^*)$ and $(r_{i+1}^*,t_{i+1}^*)$ equals to cost $K_{i+1}$
(as shown in Proposition~\ref{prop:opt_r_t}). In the contract, the
ratio between the reward and task (i.e., $r_i^*/t_i^*$) for type-$i$
decreases with the type. Thus a lower type $j<i$ collaborator will
not choose {the higher contract item} ($r_i^*,t_i^*$), since it
is too costly and not {be} efficient for him to undertake the task. A
user will not choose a lower type contract item either, otherwise
his payoff (though still positive) will decrease with a smaller
reward.

By looking into (\ref{eq:der_E_pi}), we have the following result.
%

\begin{observation}\label{ob:ti_thetai}
The master's optimal task allocation $t_i^*$ to a {type-$i$
collaborator increases in the master's preference
{characteristic} $\theta_i$ and decreases in the collaborator's
cost $K_i$. The master's equilibrium expected profit increases in
$\theta_i$ for all {$i\in\mathcal{I}_A$}.}
\end{observation}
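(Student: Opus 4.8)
The plan is to prove the two monotonicity claims separately: first the comparative statics of the optimal task $t_i^*$ in the parameters $\theta_i$ and $K_i$, and then the monotonicity of the equilibrium expected profit in $\theta_i$ via an envelope argument. For the task allocation I would start from the marginal-profit expression already computed in (\ref{eq:der_E_pi}). The key structural observation is that, for every fixed realization of the type-count vector $\{n_j\}_{j\in\mathcal{I}}$, the integrand
\[
\frac{n_i\theta_i}{1+n_it_i}-n_iK_i-(K_i-K_{i+1})\sum_{j>i}n_j
\]
is strictly decreasing in $t_i$, strictly increasing in $\theta_i$, and strictly decreasing in $K_i$. Because the expectation in (\ref{eq:der_E_pi}) is a convex combination with strictly positive multinomial weights, these three monotonicities pass to the expected marginal profit $\partial\Pi/\partial t_i$. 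Equivalently, the objective has strictly increasing differences in $(t_i,\theta_i)$ and strictly decreasing differences in $(t_i,K_i)$, since $\partial^2\Pi/\partial t_i\partial\theta_i=\mathbb{E}[n_i/(1+n_it_i)]>0$ and $\partial^2\Pi/\partial t_i\partial K_i=-\mathbb{E}[\sum_{j\ge i}n_j]<0$.

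Then I would apply monotone comparative statics. Since Problem~(\ref{eq:opt_asymm_final}) is convex with a concave objective, the interior case is cleanest: when the ordering and capacity constraints are inactive for type-$i$ at the optimum, $t_i^*$ solves $\partial\Pi/\partial t_i=0$, and the implicit function theorem gives $dt_i^*/d\theta_i=-(\partial^2\Pi/\partial t_i\partial\theta_i)/(\partial^2\Pi/\partial t_i^2)>0$ and $dt_i^*/dK_i<0$, using $\partial^2\Pi/\partial t_i^2<0$ from concavity. For the boundary cases where constraints bind I would invoke Topkis' theorem on the sublattice $\{0\le t_1\le\cdots\le t_I,\ t_i\le\bar t_i\}$: the feasible set is a lattice independent of $(\theta_i,K_i)$, and the increasing/decreasing differences established above force the largest and smallest maximizers to be monotone in each parameter.

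For the value function I would use the envelope theorem. The feasible set of Problem~(\ref{eq:opt_asymm_final}) does not depend on $\theta_i$, and Slater's condition holds (an interior point was exhibited just before the theorem), so strong duality and Danskin's theorem apply. Differentiating the optimal value with respect to $\theta_i$ picks up only the explicit dependence: by (\ref{eq:Epi})--(\ref{eq:Expected_pi}) together with Proposition~\ref{prop:opt_r_t} (the optimal rewards $r_i^*$ depend on the $t_j$ but not on $\theta_i$), we obtain $d\Pi^*/d\theta_i=\mathbb{E}_{\{n_j\}}[\log(1+n_it_i^*)]$. For $i\in\mathcal{I}_A$ the optimal task satisfies $t_i^*>0$, and since $q_i>0$ the event $n_i\ge1$ has positive probability, so this expectation is strictly positive and $\Pi^*$ strictly increases in $\theta_i$.

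The main obstacle is that $K_i$ appears not only in the type-$i$ first-order condition but also, through the term $(K_{i-1}-K_i)\sum_{j\ge i}n_j$, in the type-$(i-1)$ condition; hence a change in $K_i$ can move neighbouring tasks and, when an ordering constraint $t_{i-1}^*=t_i^*$ binds, can feed back onto $t_i^*$. The careful part is therefore to confirm that these cross-effects do not reverse the sign of $dt_i^*/dK_i$, which is precisely why the lattice/Topkis argument is needed rather than a naive single-equation implicit-function computation; the separability of the marginal-profit expressions across the $t_i$ in (\ref{eq:der_E_pi}) is the structural fact that makes this go through.
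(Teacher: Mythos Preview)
Your approach is essentially the paper's: the paper offers no proof beyond the sentence ``By looking into (\ref{eq:der_E_pi}),'' and you elaborate precisely that idea via the implicit function theorem on the separable first-order condition and an envelope/Danskin argument for the value function. Your treatment is in fact more careful than the paper's, which does not touch boundary cases at all.

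There is, however, a genuine gap in your $K_i$ argument at the boundary. Topkis' theorem needs the objective to have decreasing differences in the pair $(t,K_i)$ where $t$ is the \emph{full} task vector. But as you yourself note, $K_i$ enters the type-$(i-1)$ marginal with the opposite sign: from (\ref{eq:der_E_pi}) one has $\partial^2\Pi/\partial t_{i-1}\partial K_i=+\mathbb{E}\bigl[\sum_{j\ge i}n_j\bigr]>0$ while $\partial^2\Pi/\partial t_i\partial K_i=-\mathbb{E}\bigl[\sum_{j\ge i}n_j\bigr]<0$. Hence neither increasing nor decreasing differences in $(t,K_i)$ hold, and the separability of the marginals across the $t_j$'s does not rescue the lattice argument---Topkis simply does not apply to $K_i$. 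A correct resolution of the awkward case $t_{i-1}^*=t_i^*$ is to add the two active first-order conditions: the $K_i$ contributions cancel exactly, so the common value is independent of $K_i$ and overall $t_i^*$ is (weakly) nonincreasing in $K_i$. Since the paper treats this only as an observation about the interior first-order condition, your interior implicit-function argument already matches the level of rigor the paper expects; just do not claim that Topkis closes the boundary case for $K_i$.
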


{Given the task-reward combinations in the contract, users will benefit from keeping their private information from the master: the lowest type collaborator obtains a zero payoff and a higher type one obtains a larger and positive payoff as in Proposition~\ref{prop:opt_r_t} and Fig.~\ref{fig:contract_t_r}. To understand how the hidden information benefits the entire user population, Fig.~\ref{fig:aggregateusers} investigates users' aggregate payoff as we vary the number of users of each of the three types, $\{n_i\}_{i=1}^3$. The total population is fixed at a size of $N=120$. We can see that the users' aggregate payoff decreases as we have more low type users ($n_1$), and increases as we have more high type users ($n_{3}$). Intuitively, a higher number of type-1 collaborators ($n_1$) means that more collaborators receive a zero payoff, while a higher number of type-3 collaborators ($n_3$) means that more collaborators receive the maximum payoff.}

%



\begin{figure}[tt]
\centering
\includegraphics[width=0.45\textwidth]{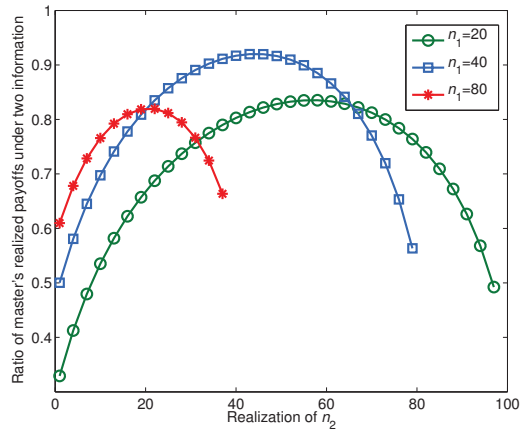}
\caption{The ratio of the master's {realized
payoffs} under asymmetrically incomplete and complete information as
a function of users' realized numbers $\{n_i\}_{i=1}^3$ in three
types ($I$=3). {Here we only show $n_{1}$ and $n_{2}$, and
$n_{3}$ equals $N-n_{1}-n_{2}$.} Other parameters are
$N=120$, $K_1=1.1$, $K_2=1$, $K_3=0.9$, $\theta_i=5$, and $q_i=1/3$ for
any $i\in\mathcal{I}$.
}
\label{fig:contract_asymm_symm_ratio}
\end{figure}

{It should be noted that the master prefers a large probability of having high type users, as these users are more efficient in performing computing tasks given the same reward. As an example, consider three types of users ($I=3$): the master enjoys the maximum collaboration benefit when all users belong to the highest type (i.e., $q_1=q_2=0$ and $q_3=1$). This is also illustrated in Fig.~\ref{fig:contract_t_r}, where the ratio between task and reward (i.e., $t_i^*/r_i^*$) is the highest for the type-3 users. When all users always belong to the same type, users cannot hide their type information from the master, and the master can hire them by just provding a zero payoff. However, when users have positive probabilities of belonging to different types, they can hide their type information from the master, and the master needs to provide more rewards to motivate high type users to contribute.}

Next, we compare the master's profits under complete and
asymmetrically incomplete information.

\begin{observation}
{Compared {with} complete information, the master obtains a
smaller equilibrium expected profit under asymmetrically incomplete
information. The gap between his realized profit under two information scenarios
is minimized when the realization (users' {numbers}
in all types)  is the
closest to the expected value.}
\end{observation}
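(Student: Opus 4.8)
The plan is to argue at the level of a single realization of the type counts and then take expectations. Write $\pi_C(\{n_i\})$ for the master's realized profit under complete information when the realized population is $\{n_i\}_{i\in\mathcal{I}}$: this is the optimal value of Problem~(\ref{eq:opt_symm}) with each $N_i$ replaced by the realized $n_i$, i.e. the per-realization optimum over contracts constrained only by the IR conditions~(\ref{eq:IR}) (the master observes types and enforces them, so no IC is needed). Write $\pi_A(\{n_i\}):=\pi(\{(r_i^*,t_i^*)\}_{i\in\mathcal{I}},\{n_i\})$ from~(\ref{eq:Epi}) for the realized profit when the master must commit, before the realization is known, to the single contract $\{(r_i^*,t_i^*)\}$ solving the expected-profit problem~(\ref{eq:opt_asymm_final}). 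The first claim is $\mathbb{E}[\pi_C]\ge\mathbb{E}[\pi_A]$, and the second is that the gap $G(\{n_i\}):=\pi_C(\{n_i\})-\pi_A(\{n_i\})$ is smallest when $\{n_i\}$ is near its mean $\{Nq_i\}$.

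For the first claim I would establish a \emph{pointwise} dominance. By Proposition~\ref{prop:opt_r_t} and Theorem~\ref{thm:incompleteinfo} the committed contract $\{(r_i^*,t_i^*)\}$ is feasible, hence in particular satisfies IR~(\ref{eq:IR}) and the capacity bounds; it is therefore a feasible point of the complete-information problem~(\ref{eq:opt_symm}) evaluated at the realized $\{n_i\}$. Since $\pi_C(\{n_i\})$ is the maximum over that (larger, IC-free) feasible set, we get $\pi_C(\{n_i\})\ge\pi_A(\{n_i\})$ for every realization. Taking the expectation over the multinomial law of $\{n_i\}$ with parameters $N$ and $\{q_i\}$ preserves the inequality and yields $\mathbb{E}[\pi_C]\ge\mathbb{E}[\pi_A]$; it is strict whenever the distribution is nondegenerate, since then the IC constraints~(\ref{eq:IC}) bind and force the positive information rents of Proposition~\ref{prop:opt_r_t}.

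For the second claim I would specialize to the interior regime of the complete-information contract, where the capacity bounds $t_i\le\bar t_i$ are slack. There Theorem~\ref{thm:completeinfo} gives $t_i^C(n_i)=(\theta_i-K_i)/(n_iK_i)$, for which $1+n_it_i^C=\theta_i/K_i$, so the type-$i$ profit collapses to $\theta_i\log(\theta_i/K_i)-(\theta_i-K_i)$, \emph{independent} of $n_i$ (the master simply rescales the per-user task as $1/n_i$). Hence $\pi_C(\{n_i\})$ is constant in the realization and $G(\{n_i\})=\mathrm{const}-\pi_A(\{n_i\})$. Because $\pi_A(\{n_i\})=\sum_i[\theta_i\log(1+n_it_i^*)-n_ir_i^*]$ from~(\ref{eq:Epi}) is a separable sum of concave-in-$n_i$ terms (with the fixed $r_i^*$ of Proposition~\ref{prop:opt_r_t}), the gap $G$ is convex in $\{n_i\}$; restricted to the simplex $\sum_i n_i=N$ it attains its minimum at an interior point and grows with the deviation $\{n_i-Nq_i\}$. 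Finally, concavity of $\pi_A$ with Jensen's inequality gives $\pi_A(\{Nq_i\})\ge\mathbb{E}[\pi_A(\{n_i\})]$, so the gap at the mean realization is no larger than the expected gap, placing the minimal gap near $\{Nq_i\}$, consistent with Fig.~\ref{fig:contract_asymm_symm_ratio}.

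The hard part will be controlling the boundary regimes that break this clean picture. When some capacity constraint $t_i\le\bar t_i$ binds for small realized $n_i$, the type-$i$ complete-information profit becomes $\theta_i\log(1+n_i\bar t_i)-n_iK_i\bar t_i$, which is no longer constant in $n_i$, so $\pi_C$ varies with the realization and $G$ may lose convexity; likewise, if a monotonicity multiplier $\lambda_i$ or capacity multiplier $v_i$ in Problem~(\ref{eq:opt_asymm_final}) is active, the committed $t_i^*$ is clamped and the corresponding term of $\pi_A$ changes shape. I would therefore prove the statement rigorously in the all-interior regime (every $v_i^*=\lambda_i^*=0$ and every type in $\mathcal{I}_A$), where $G$ is convex and the Jensen bound localizes the minimal gap at the mean, and handle the boundary cases through the monotonicity of the individual profit terms supplemented by the numerical evidence of Fig.~\ref{fig:contract_asymm_symm_ratio}. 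A secondary subtlety is that convexity localizes the minimum but the exact minimizer of $\pi_A$ solves a first-order condition that need not land precisely on $\{Nq_i\}$, so ``closest to the expected value'' should be read as the convex gap being smallest in a neighborhood of the mean, which is exactly what the Jensen estimate certifies.
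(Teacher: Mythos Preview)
The paper does not actually prove this statement. It is labeled an \emph{Observation}, and the text that follows it consists only of the one-line remark that ``the master obtains the maximum profit under complete information'' together with a discussion of the numerical Figure~\ref{fig:contract_asymm_symm_ratio}, whose peak near $n_i=Nq_i$ is said to be ``consistent with the fact that the master maximizes his expected profit under asymmetrically incomplete information.'' There is no analytical argument beyond that.

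Your treatment of the first claim (pointwise dominance via feasibility of the committed contract in the IR-only problem, then expectation) is correct and is exactly the content of the paper's single sentence, just made explicit.

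Your treatment of the second claim goes substantially further than the paper. The interior-regime observation that $\pi_C(\{n_i\})$ is \emph{constant} in the realization (because $1+n_it_i^C=\theta_i/K_i$) is a genuinely nice structural point not mentioned in the paper, and it cleanly reduces the gap to $-\pi_A(\{n_i\})$ plus a constant. Two caveats are worth flagging. First, convexity of $G$ on the simplex does not by itself place the minimizer at $\{Nq_i\}$ or make $G$ ``grow with the deviation $\{n_i-Nq_i\}$''; a convex function on a simplex can be minimized anywhere, so that clause overstates what you have shown. The Jensen bound you give, $G(\{Nq_i\})\le\mathbb{E}[G]$, is the honest conclusion and is already stronger than anything in the paper. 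Second, the link between ``the master optimizes \emph{expected} profit'' and ``the gap is smallest at the \emph{mean} realization'' is heuristic in both your write-up and the paper; your acknowledgment of this as a secondary subtlety is appropriate, and it is precisely why the paper leaves the claim at the level of a numerically supported observation rather than a theorem.
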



Figure~\ref{fig:contract_asymm_symm_ratio} shows the ratio of the
master's realized payoffs under asymmetrically incomplete and
complete information, which is a function of users' realizations
$\{n_i\}_{i=1}^3$ in all three types. {This ratio is always no
larger than 1, as the master obtains the maximum profit under
complete information.} {This profit ratio reaches  its maximum
92\% when users' type realization matches the expected value, i.e.,
$n_i=Nq_i=40$ for $i=1,2$ (and thus $n_{3}=N-n_{1}-n_{2}=40$ as
well).} This is {consistent with the fact that the master maximize his expected profit} under asymmetrically incomplete information. Note that even in this case, there is still a profit loss for the master under asymmetrically incomplete information due to the loss of information.

\section{Conclusion}

This paper analyzes different mechanisms that a master
can use to motivate the collaboration of smartphone users on both
\emph{data acquisition} and \emph{distributed computing}. {Our
proposed incentive mechanisms cover several possible information
{scenarios}  that the master may face in
reality.}
For data acquisition applications, we propose a reward-based
collaboration scheme {for the master to attract
{enough} users by giving out {the minimum} reward}.
For distributed computing applications, we use contract theory to
study how a master decides different task-reward combinations
for {many} different types of users.

There are some possible ways to extend the results in this paper.
For the data acquisition applications, for example, we can consider a flexible
revenue model instead of a threshold one. {For example, Google
can still benefit if a few users take pictures of some critical events.}
{The master will still give out some reward even facing a small number of users, and his reward and profit would increase as more and more users choose to collaborate. } {Moreover, in some network with small number of users, the geographical positions of users could be more important than the total number. We will study such an spatial issue in the future. }

\bibliographystyle{IEEEtran}

\appendices

\section{Proof of Theorem~\ref{thm:data_complete}}\label{app:data_complete}
If $V<n_0C_0$, then the master's announced total reward $R$ is also
smaller than $n_0C_0$ to make a profit. This reward is not enough to
compensate even $n_0$ users with smallest costs, thus no users will
join. Regarding this, the master will not seek users' collaboration
in Stage I by announcing zero reward $R^*=0$. Next we focus on
$V\geq n_0C_0$.

We first prove the existence of the equilibrium in
Theorem~\ref{thm:data_complete}. In the strategies shown in
Theorem~\ref{thm:data_complete}, involved users will not leave the
collaboration since they have non-negative payoffs. Also, those
users not in the collaboration will not decide to collaborate,
otherwise they receive negative payoffs. The master will not deviate
by decreasing or increasing the $R^*$, otherwise he will involve
less than $n_0$ users or loss profit, respectively.

We then prove the uniqueness of the equilibrium by contradiction. Note that
$R^*=n_0C_0$ corresponds to a unique state of users' equilibrium decisions in
Theorem~\ref{thm:data_complete}. Suppose there
exists another equilibrium with a different $\hat{R}^*\neq R^*$.
If $\hat{R}^*< R^*$, the master
cannot attract enough collaborators and the collaboration is not successful; if $\hat{R}^*>R^*$, the master
has incentive to decrease $\hat{R}^*$ to $R^*$. Thus there does
exist such an equilibrium with $\hat{R}^*\neq R^*$.

\section{Proof of No Collaboration and Pure Strategy NE in Theorem~\ref{them2}}\label{app:SymmIncomInfo}
We focus on users' pure strategies where $R$ is already given. If
$R<n_0\mu$, this reward cannot attract $n_0$ collaborators where
each user's collaboration cost is believed to be $\mu$. Thus the
collaboration is not successful and no user will collaborate in
Stage II. Next we focus on $R\geq n_0\mu$.
\begin{itemize}
\item If $n_0\mu\leq R<N\mu$, we prove $n^*=\lfloor
\frac{R}{\mu}\rfloor$ by contradiction. Suppose there are $n^*\neq
\lfloor \frac{R}{\mu}\rfloor$ collaborators at the equilibrium.
\begin{itemize}
\item If $n^*<\lfloor \frac{R}{\mu}\rfloor$, then another user will
join the collaboration and receive nonnegative expected payoff
(nonnegative payoff $\frac{R}{n^*+1}-\mu$ when collaboration is
successful and zero payoff otherwise).
\item If $n^*>\lfloor
\frac{R}{\mu}\rfloor$, then some involved user will leave the
collaboration since he receives negative expected payoff (negative
payoff $\frac{R}{n^*}-\mu$ if the collaboration is successful and
zero payoff otherwise).
\end{itemize}
Thus there are $n^*=\lfloor \frac{R}{\mu}\rfloor$ collaborators at
the equilibrium.
\item If $R\geq N\mu$, each user can join the collaboration and
receive non-negative expected payoff and thus $n^*=N$.
\end{itemize}


\section{Proof of Existence And Uniqueness of Equilibrium Threshold in Theorem~\ref{them4}}\label{app:AsymmInfoStageII}
Recall that $\Phi(\gamma)$ is given in (\ref{eq:StageII_asymm}). Here we want to prove that there exists a unique solution $\gamma^*(R)$ (or simply $\gamma^*$) to
$\Phi(\gamma)=0$, which satisfies
$\frac{R}{N}<\gamma^*<\frac{R}{n_0}$.

We divide the proof into the following three parts, depending on
relation between $R$ and $\gamma^*$. For simplicity, we represent
$F(\gamma^*)$ as $F^*$.
\begin{itemize}
\item Suppose that there exists a solution $\gamma^*$ to
$\Phi(\gamma)=0$ in (\ref{eq:StageII_asymm}) which satisfies $R\leq
n_0\gamma^*$. Since $\Phi(\gamma^*)$ is increasing in $R$, we have
$\Phi(\gamma^*)\leq \Phi(\gamma^*)\mid_{R=n_0\gamma^*}$. That is,
\begin{align}\Phi(\gamma^*)\leq & \sum_{m=n_0-1}^{N-1}\left(\frac{n_0\gamma^*
}{m+1}-\gamma^*\right)\left(\begin{array}{c}
N-1\\
m\end{array}\right)\nonumber\\
&\cdot(F^*)^m(1-F^*)^{N-1-m},\nonumber\end{align} which is negative
due to our consideration of $n_0<N$ and $F*>0$. Thus there does not
exist any solution $\gamma^*$ to $\Phi(\gamma)=0$ satisfying $R\leq
n_0\gamma^*$ in Stage II.

\item Suppose that there exists a solution $\gamma^*$ to
$\Phi(\gamma)=0$ which satisfies $R\geq N\gamma^*$. We have
$\Phi(\gamma^*)\geq \Phi(\gamma^*)\mid_{R=N\gamma^*}$. That is,
\begin{align}
\Phi(\gamma^*)\geq &
\sum_{m=n_0-1}^{N-1}\left(\frac{N\gamma^*}{m+1}-\gamma^*\right)\left(\begin{array}{c}
N-1\\
m\end{array}\right)\nonumber\\
&\cdot(F^*)^m(1-F^*)^{N-1-m},\nonumber
\end{align} which is
positive due to our consideration of $n_0<N$ and $F^*>0$. Thus there
does not exist any solution $\gamma^*$ to $\Phi(\gamma)=0$
satisfying $R\geq N\gamma^*$ in Stage II.

\item When $n_0\gamma<R< N\gamma$, we first show that there exists
a solution $\gamma^*$ to $\Phi(\gamma)=0$ and then prove its
uniqueness. We can check that $\lim_{\gamma\rightarrow
(R/N)^+}\Phi(\gamma)>0$ and $\lim_{\gamma\rightarrow
(R/n_0)^-}\Phi(\gamma)<0$.
Due to the continuity of $\Phi(\gamma)$ on $\gamma$, there exists a
solution $\gamma^*$ to $\Phi(\gamma)=0$. Next we prove the
uniqueness of the solution by contradiction.

Suppose there exist at least two different solutions to
$\Phi(\gamma)=0$. The first derivative $\Phi(\gamma)$ over $\gamma$
at one solution (denoted as ${\gamma}^*$ with corresponding $F^*$)
is nonnegative. But we have
\begin{align}
&\frac{\partial \Phi({\gamma^*})}{\partial
\gamma}=\sum_{m=n_0-1}^{N-1}
\left(\begin{array}{c}N-1\\m\end{array}\right){(F^*)}^{m-1}(1-F^*)^{N-m-2}\nonumber\\&\cdot[-F^*(1-F^*)+
\frac{dF^*}{d\gamma}(\frac{R}{m+1}-{\gamma}^*)(m+(1-N){F^*})],\nonumber
\end{align}
which is smaller than
\begin{align}\label{eq:proof_uniq}
\sum_{m=n_0-1}^{N-1}&
\left(\begin{array}{c}N-1\\m\end{array}\right)(F^*)^{m-1}(1-F^*)^{N-m-2}\nonumber\\
&\cdot[\frac{d{F^*}}{d\gamma}(\frac{R}{m+1}-{\gamma^*})(m+(1-N)F^*)].
\end{align}
By substituting $\Phi(\gamma^*)=0$ with $F=F^*$ into
(\ref{eq:proof_uniq}), we can show
\begin{align}
\frac{\partial \Phi({\gamma^*})}{\partial
{\gamma}}<&\sum_{m=n_0-1}^{N-1}\left(\begin{array}{c}
N-1\\m\end{array}\right){(F^*)}^{m-1}(1-F^*)^{N-m-2}\nonumber\\&
\cdot\left(\frac{R}{m+1}-{\gamma}^*\right)m\frac{d{F^*}}{d\gamma}<0,\label{eq:Phi_gamma}
\end{align}
where $F(\cdot)$ is an increasing function. This contradicts with
our supposition that the first derivative of
$\partial\Phi(\gamma^*)/\partial \gamma$ is nonnegative. This ends
our proof of the existence of unique solution $\gamma^*$ to
$\Phi(\gamma)=0$.
\end{itemize}
%

\section{Proof of Theorem~\ref{thm:gamma_R_n0_N}}\label{app:gamma_R_n0_N}
We first prove the relation between $\gamma^*$ and $R$. Recall that
(\ref{eq:Phi_gamma}) has shown that $\Phi(\gamma^*)$ is decreasing
in $\gamma^*$, while (\ref{eq:StageII_asymm}) shows that
$\Phi(\gamma^*)$ is linearly increasing in $R$. By applying implicit
function theorem, we can derive
\[
\frac{d \gamma^*}{d R}=-\frac{\partial \Phi(\gamma^*)}{\partial
\gamma}/\frac{\partial \Phi(\gamma^*)}{\partial R}>0.
\]
Thus $\gamma^*$ is increasing in $R$.

Next we prove the relation between $\gamma^*$ and $n_0$. Let us
denote $F(\gamma^*)$ as $F^*$ and define
$$\phi(m):=\left(\frac{R}{m+1}-\gamma^*\right)\left(\begin{array}{c}
N-1\\
m\end{array}\right)(F^*)^m(1-F^*)^{N-1-m},$$ then we can rewrite
$\Phi(\gamma^*)$ in (\ref{eq:StageII_asymm}) as
$\sum_{n_0-1}^{N-1}\phi(m)$. Since
Section~\ref{app:AsymmInfoStageII} shows that $n_0\gamma^*<R<
N\gamma^*$, $\phi(m)$ is positive when $m$ is small and is negative
when $m$ is large. As $n_0$ increases to $n_0+1$, previous positive
term $\phi(n_0-1)$ in $\Phi(\gamma^*)$ disappears while all negative
terms still remain. Hence, $\Phi(\gamma^*)$ decreases with current
$n_0$. Recall that we have shown in (\ref{eq:Phi_gamma}) that
$\Phi(\gamma^*)$ is decreasing in $\gamma^*$, thus $\gamma^*$ is
decreasing in $n_0$ due to $\Phi(\gamma^*)=0$.

Next we prove the relation between $\gamma^*$ and $N$. As $N$
increases to $N+1$, we have an additional negative term $\phi(N)$
appeared in the (\ref{eq:StageII_asymm}) (denoted by
$\tilde{\Phi}(\gamma^*)$). For a previous term $\phi(m)$ with
$n_0-1\leq m\leq N-1$, it changes to
$$\tilde{\phi}(m)=\left(\frac{R}{m+1}-\gamma^*\right)\left(\begin{array}{c}
N\\
m\end{array}\right)(F^*)^m(1-F^*)^{N-m}.$$ Thus we can rewritten
$\tilde{\phi}(m)=(1-F^*){\phi}(m)\frac{N}{N-m}$, where the fraction
term is increasing in $m$. Then the absolute value of a previously
negative term ${\phi}(m)$ (with large $m$) is relatively enlarged
compared to a positive term (with small $m$). Hence, the summation
of the first $N$ terms in $\tilde{\Phi}(\gamma^*)$ is negative, and
$\tilde{\Phi}(\gamma^*)$ with an additional negative term $\phi(N)$
is further decreased to be negative. Recall that we have shown in
(\ref{eq:Phi_gamma}) that $\Phi(\gamma^*)$ is decreasing in
$\gamma^*$, thus $\gamma^*$ is decreasing in $N$ due to
$\Phi(\gamma^*)=0$.


\section{Proof of Theorem~\ref{thm:data_asymm_f}}\label{app:data_asymm_f}
Recall that (\ref{eq:asymm_f}) shows that $f(R)$ is linearly
increasing in $V$ for any $R$ values, thus the master's equilibrium
expected payoff $f(R^*)$ is increasing in $V$.

Next we prove that $f(R)$ and $f(R^*)$ are decreasing in $n_0$.
Notice that the increase of $n_0$ decreases the number of (positive)
summation terms in $f(R)$, and affects $F^*$ (i.e., $F(\gamma^*)$)
in each term. Recall that Theorem~\ref{thm:gamma_R_n0_N} has shown
that $\gamma^*$ and thus $F^*$ are decreasing in $n_0$. Thus if we
can show that $f(R)$ is also increasing in $F^*$, then $f(R)$ is
decreasing in $n_0$.

The partial derivative of $f(R)$ over $F^*$ is
\begin{align}
&\frac{\partial f(R)}{\partial F^*}=(V-R)\cdot
\nonumber\\&\sum_{n=n_0}^N(n-NF^*)\left(\begin{array}{c}N\\n\end{array}\right)(F^*)^{n-1}(1-F^*)^{N-n-1}.
\label{eq:proof_pi_rho}\end{align} According to Theorem~\ref{them4},
the equilibrium collaborator number is
$$n^*=\sum_{n=0}^N n\left(\begin{array}{c}N\\n\end{array}\right)(F^*)^{n}(1-F^*)^{N-n},$$
which leads to $n^*=N\rho^*$. Thus we have
\begin{align}&(V-R)\sum_{n=0}^N(n-NF^*)\left(\begin{array}{c}N\\n\end{array}\right)(F^*)^{n-1}(1-F^*)^{N-n-1}\nonumber\\
&=\frac{V-R}{F^*(1-F^*)}\left(n^*-NF^*\right)=0.
\label{eq:proof_pi_rho_opt}\end{align} Notice that the sign of each
term in the summation operation of (\ref{eq:proof_pi_rho_opt}) is
decided by the relation between $n$ and $NF^*$, thus a term with
small $n$ is negative and a term with large $n$ is positive.
Compared to (\ref{eq:proof_pi_rho_opt}), $\partial f(R)/\partial
F^*$ in (\ref{eq:proof_pi_rho}) has less negative terms in the
summation operation and is thus positive. Thus we conclude that
$f(R)$ and equilibrium $f(R^*)$ are decreasing in $n_0$.

\section{Analysis of Model (B) in Three Information
Scenarios}\label{app:model2}

Here we turn to study Model (B) where the master will reward only with
successful collaboration. The analysis of this model is very similar
to Model (A), and in the following we briefly discuss the difference
between the two models due to the page limit.
\begin{itemize}
\item Under complete information, we can derive the same results as in
Theorem~\ref{thm:data_complete} for Model (B), by using a
similar analysis.

\item Under symmetrically incomplete information, for the equilibrium
of Stage II, we can similarly derive the same pure strategy NE as in
Theorem~\ref{them2} for Model (B), but the mixed strategy NE is different. The
mixed strategy NE exists only when $R$ is sufficiently large, and
the equilibrium probability $p^*$ in (3) is the unique solution to
\[
\mathbb{E}_m\left(\frac{R}{m+1}\boldsymbol{1}_{\{m+1\geq n_0\}}-\mu\right)=0,
\]
where the expectation $\mathbb{E}$ is taken over the random variable
$m$ that follows a binomial distribution $B(N-1,p)$. For the
equilibrium of the whole collaboration game, we can still derive the
same results as in Theorem~\ref{thm:data_symm}.
\item Under asymmetrically incomplete information, for the
equilibrium of Stage II, we can derive a similar equilibrium
decision threshold $\gamma^*(R)$ as the solution to
\begin{equation}\label{eq:proof_model2_asymm}
\mathbb{E}_m\left(\frac{R}{m+1}\boldsymbol{1}_{\{m+1\geq
n_0\}}-\gamma\right)=0,
\end{equation}
where the expectation is taken over $m$ that follows a binomial
distribution $B(N-1,F(\gamma))$.\footnote{Note that the solution to
(\ref{eq:proof_model2_asymm}) will exist only when $R$ is
sufficiently large, and the solution may not be unique. If there
exist two solutions (denoted by $\gamma_1^*$ and $\gamma_2^*$ with
$\gamma_1^*<\gamma_2^*$), each user $i$ will pick up $\gamma_2^*$
instead of $\gamma_1^*$ since it gives him a larger payoff
$\gamma_2^*-C_i$ (i.e., pareto-optimal for all users).} Then we can
similarly analyze the master's maximization problem in
(\ref{eq:asymm_f}). The difference from Model (A) is that here the
master needs to determine a larger reward $R$ to attract enough
users who face a higher risk.
\end{itemize}

\section{Proof of Theorem~\ref{thm:completeinfo}}\label{app:completeinfo}
\emph{Proof.} By observing Problem (\ref{eq:opt_symm_i}), the master
will only hire type-$i$ users when his marginal
{utility} is larger than marginal {cost (i.e.,
reward to users)} at $t_i=0$. That is,
$$\frac{d\pi_i(t_i)}{d t_i}\Big|_{t_i=0}=\left(\frac{N_i\theta_i}{1+N_it_i}-N_iK_i\right)|_{t_i=0}=N_i(\theta_i-K_i)>0,$$
{which does not depend on the other types.}
Thus the master will hire type-$i$ users only when $\theta_i>K_i$.
Since $\pi_i(t_i)$ is concave in $0\leq t_i\leq \bar{t}_i$, we can
directly examine the first-order condition of $\pi_i(t_i)$ over
$t_i$ for each type. Then we can derive the equilibrium contract
item for type-$i$ in (\ref{eq:contract_symm}).

By substituting all contract items into the objective function in
Problem~(\ref{eq:opt_symm}), we can further derive the master's
equilibrium profit in (\ref{eq:client_symm}).


\section{Proof of Proposition~\ref{prop:feasible}}\label{app:3contds}
\subsection{Proof of sufficient conditions}
We use mathematical induction to prove the three conditions in
Proposition~\ref{prop:feasible} are sufficient conditions for
contract feasibility. Let us denote $\mathcal{C}(l)$ as a subset
which contains the first $l$ task-reward combinations in the
contract $\mathcal{C}$. That is,
$\mathcal{C}(l)=\{(r_i,t_i)\}_{i=1}^l$.

We first show that $\mathcal{C}(1)$ is feasible. Since there is only
one user type, the contract is feasible as long as it satisfies IR
constraint for type-1. This is true due to $\mathtt{Condition}(+)$
in Proposition~\ref{prop:feasible}.

Next we show that if contract $\mathcal{C}(l)$ is feasible, then the
new contract $\mathcal{C}(l+1)$ by adding new item
$(r_{l+1},t_{l+1})$ is also feasible. To achieve this, we need to
show the following results.
\begin{itemize}
\item \emph{Result I:} the IC and IR constraints for
type-$({l+1})$ users:
\begin{equation}\label{eq:proof_result1}
\begin{cases} \; r_{l+1}-K_{l+1}t_{l+1}\geq r_{i}-K_{l+1}t_{i},\ \forall
i=1,...,l\\
\; r_{l+1}-K_{l+1}t_{l+1}\geq 0,
\end{cases}
\end{equation}
\item \emph{Result II:} for the original $l$ types already contained
in the contract $\mathcal{C}(l)$, the IC constraints are still
satisfied after adding the new type-$(l+1)$:
\begin{equation}\label{eq:proof_result2}
r_i-K_it_i\geq r_{l+1}-K_it_{l+1}, \forall i=1,...,l.
\end{equation}
Note that the new contract $\mathcal{C}(l+1)$ will satisfy the IR
constraints for all original $l$ types of users, since the original
contract $\mathcal{C}(l)$ is feasible.
\end{itemize}

\emph{Proof of Result I in (\ref{eq:proof_result1}):} First, we
prove the IC constraint for type-$(l+1)$. Since contract
$\mathcal{C}(l)$ is feasible, the IC constraint for a type-$i$ user
must hold, i.e.,
\[
r_j-K_lt_j\leq r_l-K_lt_l,\forall j=1,...,l.
\]
Also, the left inequality of (\ref{eq:contract_leq}) in
$\mathtt{Condition}(\leq)$ can be transformed to
\[
r_{l}+K_{l+1}(t_{l+1}-t_l)\leq r_{l+1}.
\]
By combining the above two inequalities, we have
\begin{equation}\label{eq:proof_model2_asymm_combine}
r_j-K_lt_j+K_{l+1}(t_{l+1}-t_l)\leq r_{l+1}-K_lt_l,\forall
j=1,...,l.
\end{equation}
Notice that $K_{l+1}<K_l$ and $t_j\leq t_l$ in
$\mathtt{Condition}(\uparrow)$, we also have
\[
K_{l+1}(t_l-t_j)\leq K_l(t_l-t_j).
\]
By substituting this inequality into
(\ref{eq:proof_model2_asymm_combine}), we have
\begin{equation}\label{eq:proof_model2_suff}
r_{l+1}-K_{l+1}t_{l+1}\geq r_j-K_{l+1}t_j,
\end{equation}
which is actually the IC constraint for type-$(l+1)$.

Next, we show that the IR constraint for type-$(l+1)$. Since
$K_{l+1}<K_j$ for any $j\leq l$, then
\[
r_j-K_{l+1}t_j\geq r_j-K_jt_j.
\]
By combining this inequality and (\ref{eq:proof_model2_suff}), we
have
\[
r_{l+1}-K_{l+1}t_{l+1}\geq r_j-K_jt_j\geq 0,
\]
due to the IR constraint for type-$j$. Thus we prove the IR
constraint for type-$(l+1)$ in (\ref{eq:proof_result1}).

\emph{Proof of Result II in (\ref{eq:proof_result2}):} Since
contract $\mathcal{C}(l)$ is feasible, the IC constraint for
type-$j$ holds, i.e.,
\[
r_l-K_jt_l\leq r_j-K_jt_j,\forall j=1,...,l.
\]
Also, we can transform the right inequality of
(\ref{eq:contract_leq}) in $\mathtt{Condition}(\leq)$ to
\[
r_{l+1}\leq r_l+K_l(t_{l+1}-t_l).
\]
By combining the above two inequalities, we conclude
\[
r_{l+1}-K_jt_l\leq K_l(t_{l+1}-t_l)+r_j-K_jt_j.
\]
Notice that $K_l<K_j$ and $t_{l+1}\geq t_l$ in
$\mathtt{Condition}(\uparrow)$, we also have
\[
K_l(t_{l+1}-t_l)\leq K_j(t_{l+1}-t_l).
\]
By combining the above two inequalities, we conclude
\[
r_j-K_jt_j\geq r_{l+1}-K_jt_{l+1},\forall j=1,...,l,
\]
which is actually the IC constraint for type-$j$ in
(\ref{eq:proof_result2}).

\subsection{Proof of necessary conditions}
We prove the three conditions in Proposition~\ref{prop:feasible}
are necessary conditions for contract feasibility. It is easy to see
that $\mathtt{Condition}(+)$ is just the IR condition for type-$1$
in a feasible contract. Also, the right inequality of
$\mathtt{Condition}(\leq)$ can be derived from the IC constraint for
type-$(i-1)$, and the left inequality can be derived from the IC
constraint for type-$i$.

Next we prove $\mathtt{Condition}(\uparrow)$ is also the necessary
condition. We divide the proof into two parts.
\begin{itemize}
\item We first prove that if $K_i>K_j$ then $t_i\leq t_j$ by
contradiction. Suppose $t_i>t_j$, then we have
\begin{equation}\label{eq:proof_model2_nec}
K_i(t_i-t_j)>K_j(t_i-t_j),
\end{equation}
due to $K_i>K_j$. Notice that the feasible contract satisfies the IC
constraints for type-$i$ and type-$j$ users, we have
\[
r_i-K_it_i\geq r_j-K_it_j,
\]
and
\[
r_j-K_jt_j\geq r_i-K_jt_i.
\]
By combining the above two inequalities, we conclude
\[
K_it_i+K_jt_j\leq K_it_j+K_jt_i,
\]
which contradicts with (\ref{eq:proof_model2_nec}).
\item We then prove that $t_i\geq t_j$ if and only if $r_i\geq r_j$.
\begin{itemize}
\item If $t_i>t_j$, we want to prove $r_i>r_j$. Due to the IC
constraint for type-$i$, we have
\[
r_i-K_it_i\geq r_j-K_it_j,
\]
which can be transformed to
\[
r_i-r_j\geq K_i(t_i-t_j).
\]
Since $t_i>t_j$, we can derive $r_i>r_j$ from the above inequality.
\item If $r_i>r_j$, we want to prove that $t_i>t_j$. Due to the IC
constraint for type-$j$, we have
\[
r_j-K_jt_j\geq r_i-K_jt_i,
\]
which can be transformed to
\[
K_j(t_i-t_j)\geq r_i-r_j.
\]
Since $r_i>r_j$, we can derive $t_i>t_j$ from the above inequality.
\item Using a similar analysis, we can prove that $r_i=r_j$ if and
only if $t_i=t_j$.
\end{itemize}
\end{itemize}

\section{Proof of Theorem~\ref{thm:incompleteinfo}}\label{app:incompleteinfo}
All {involved users} in set $\mathcal{I}_A$ will receive positive rewards and
tasks.  According to $\mathtt{Condition} (\uparrow)$, the rewards
and tasks are {non-decreasing} in the types. Let us denote the lowest
type of involved users in set $\mathcal{I}_A$ as type-$\hat{\j}$.
{If $\hat{\j}=1$, then relation~(\ref{eq:opt_r_t1}) shows that a
type-$1$ collaborator receives a zero payoff. If $\hat{\j}>1$, then
any lower type $k<\hat{\j}$ is not in set $\mathcal{I}_A$, and
receives zero task and zero reward. By using
relation~(\ref{eq:opt_r_ti}), we can further derive that
$r_{\hat{\j}}^*=K_{\hat{\j}}t_{\hat{\j}}^*$, which means the lowest
type collaborator still obtains a zero payoff.}

{According to (\ref{eq:opt_r_ti}), the
type-$i$ collaborator's equilibrium payoff is
$r_i^*-K_it_i^*=r_{i-1}^*-K_it_{i-1}^*$, which is strictly larger
than type-$(i-1)$ collaborator's payoff $r_{i-1}^*-K_{i-1}t_{i-1}^*$
as $K_i<K_{i-1}$.} {Thus a higher type collaborators receive a
larger positive payoff.}

Next we show which types of users are involved as collaborators. By observing the
first derivative of the master's expected profit over $t_i$ in (\ref{eq:der_E_pi}), $t_i$ only appears in the last bracket. The master will
involve type-$i$ users only when the last bracket of
(\ref{eq:der_E_pi}) is positive at $t_i=0$. {This leads to} the
collaborator set in (\ref{eq:collaborate_set_asymm}). By comparing
$\mathcal{I}_C$ in (\ref{eq:collaborate_set_symm}) and
$\mathcal{I}_A$ in (\ref{eq:collaborate_set_asymm}), we conclude
{that} $|\mathcal{I}_A|\leq |\mathcal{I}_C|$.

\end{document}